\newtheorem{lemma}{Lemma}
\newtheorem{proposition}{Proposition}
\newtheorem{assumption}{Assumption}
\newtheorem{remark}{\bf Remark}
\def\phi{\varphi}
\def\({\left(}
\def\){\right)}
\def\b0{{\mathbf{0}}}
\def\rv#1{\textcolor{black}{#1}}
\begin{document}	

\title{\huge  Joint Compression and Deadline Optimization for Wireless  Federated Learning  \vspace{-5pt}}


\IEEEtitleabstractindextext{%
\begin{abstract}
    \emph{Federated edge learning} (FEEL) is a popular distributed learning framework for privacy-preserving at the edge,    
    in which densely distributed edge devices periodically exchange model-updates with the server to complete the global model training. 
    Due to limited bandwidth and uncertain wireless environment, FEEL may impose heavy burden to the current communication system. 
    In addition, under the common FEEL framework, the server needs to wait for the slowest device to complete the update uploading before starting the aggregation process, 
    leading to the straggler issue that causes prolonged communication time. 
    In this paper, we propose to  accelerate FEEL from two aspects: i.e., 1) performing data compression on the edge devices and 2) setting a deadline on the edge server to exclude the straggler devices. 
    However, undesired {gradient compression errors} and {transmission outage}  are introduced by the aforementioned operations respectively, affecting the convergence of FEEL as well. 
    In view of  these practical issues, we formulate a training time minimization problem, with the compression ratio and deadline to be optimized. 
    To this end, 
    an asymptotically unbiased aggregation scheme is first proposed to ensure zero optimality gap after convergence, 
    and the impact of compression error and transmission outage on the overall training time are quantified through convergence analysis. 
    Then, the formulated problem is solved in an alternating manner, based on which, the novel \emph{joint compression and deadline optimization} (JCDO) algorithm is derived.  
    Numerical experiments 
    for different use cases in FEEL including image classification and autonomous driving 
    show that the proposed method is nearly 30X faster than the vanilla FedSGD algorithm, and outperforms the state-of-the-art schemes. 
\end{abstract}

\begin{IEEEkeywords}
Federated edge learning, data compression, deadline, convergence analysis, autonomous driving.
\end{IEEEkeywords}}
\author{Maojun Zhang, Yang Li, Dongzhu Liu, \emph{Member, IEEE}, Richeng Jin*, \emph{Member, IEEE}, Guangxu Zhu*, \emph{Member, IEEE}, Caijun Zhong, \emph{Senior Member, IEEE}, and Tony Q.S. Quek, \emph{Fellow, IEEE} 
\IEEEcompsocitemizethanks{
    \IEEEcompsocthanksitem M. Zhang and C. Zhong are with the College of information Science and Electronic Engineering, Zhejiang University, Hangzhou, China. 
    \IEEEcompsocthanksitem Y. Li is with China Academy of Information and Communications Technology, Beijing, China. 
    \IEEEcompsocthanksitem D. Liu is with School of Computing Science, University of Glasgow. 
    \IEEEcompsocthanksitem R. Jin is with the Department of Information and Communication Engineering, Zhejiang University, Hangzhou, China, 310007, the Zhejiang–Singapore Innovation and AI Joint Research Lab, Hangzhou, China, 310007, and also with Zhejiang Provincial Key Lab of Information Processing, Communication, and Networking (IPCAN), Hangzhou, China, 310007 (e-mail: richengjin@zju.edu.cn). 
    \IEEEcompsocthanksitem G. Zhu is with the Shenzhen Research Institute of Big Data, Shenzhen, Guangdong 518172, China, and also with the Peng Cheng Laboratory, Shenzhen, Guangdong 518055, China, and also with the Pazhou Laboratory (Huangpu), Guangzhou, Guangdong 510555, China (e-mail: gxzhu@sribd.cn).  
    \IEEEcompsocthanksitem T. Q. S. Quek is with the Singapore University of Technology and Design, Singapore 487372.     
    }
}
\maketitle


\section{introduction}
\emph{Federated edge learning} (FEEL)  is a popular distributed learning framework for tackling the privacy-preserving training issues at the edge \cite{mcmahan2017communication}. 
By distributing the computations across edge devices, the edge server only needs to collect the local updates instead of the original private data. 
Through frequent model interaction, the rich data widely distributed at edge devices can be fully utilized for training without compromising their privacy. 
Nevertheless, the improvement of model performance in deep learning is usually accompanied with a significant increase  in model size. 
As the edge devices need to upload the computed learning updates (e.g., local gradients or models) to the edge server at each round, the data transmission brings a huge overload to the current wireless communication system, causing transmission jam or unbearable delay. 
This thus prompts an active research area focusing on developing communication-efficient techniques to deploy FEEL over wireless networks \cite{saad2019vision,zhu2020toward}. 
\subsection{Data Compression in FEEL}
To alleviate the communication bottleneck, a common and intuitive approach is to compress the local updates to be uploaded via lossy compression techniques, e.g., quantization and sparsification. 
By quantization, the compression is achieved through representing each element with less bits. A lot of quantization schemes had been developed for the distributed learning settings, e.g., SignSGD \cite{bernstein2018signsgd,zhu2020one,karimireddy2019error,zheng2020design}, TernSGD \cite{wen2017terngrad}, QSGD \cite{alistarh2017qsgd}, etc. 
By sparsification, the insignificant elements will be dropped, such that the compression can be achieved by only encoding the non-zero elements \cite{lin2017deep,stich2018sparsified,wangni2018gradient,zhang2021federated,isik2022sparse,wen2022federated,panda2022sparsefed,li2020ggs,ozfatura2021time}. 
The work \cite{konevcny2016federated} first considered the compression issue under the federated learning setting, where random mask, subsampling, and probabilistic quantization were proposed to reduce the uplink communication costs. 
Realizing the vector features of the local update, the universal vector based rather than element based quantization scheme was proposed to further reduce the communication overhead \cite{shlezinger2020uveqfed}. 
Based on quantization and sparsification, update compression had been achieved from the perspective of compressive sensing as well \cite{oh2021communication,jeon2020compressive}. 
With the aforementioned schemes, the transmission overhead could be reduced effectively. 
{In the meantime, {gradient compression error} will be introduced \cite{mitchell2022optimizing,liu2022hierarchical,reisizadeh2020fedpaq}. 
For conducting compression in practical FEEL scenario, in addition to designing compression algorithm, one should also consider setting the compression ratio to efficiently utilize the communication resources. 
}
Aware of the trade-off between the compression ratio and compression error, the authors in \cite{jhunjhunwala2021adaptive} proposed an adaptive quantization strategy, which starts with 
a more aggressive quantization scheme and gradually increases its precision
as training progresses. 
A similar ``latter is better’’ principle was proposed in \cite{shen2021resource}. 
{Besides, the authors in \cite{han2020adaptive} proposed a fairness-aware sparcification method, which holds an equal number of non zero elements among the updates from different devices. 
Nevertheless, the aforementioned methods focus on dynamic compression ratio over different training stages, 
while the unstable wireless environment, representing the communication ability, is neglected. 
Given this, it is desired to design the setting method for the purpose of combating against dynamic wireless channel and accelerating FEEL convergence. 
}

\subsection{Straggler Mitigation in FEEL}
In addition to performing data compression on the device side, some efforts can be made on the server side as well to tackle the communication bottleneck. 
In a typical FEEL system, the upload communication time depends on the slowest device. 
Due to the heterogeneity of computation and communication resources among devices, some devices with poor channel state or computational capacity may upload their updates much slower than the others, critically prolonging the per-round latency. 
This is the well-known ``straggler’’ problem \cite{chen2021distributed}. 
{
Given this, FEEL are extended to the asynchronous setting \cite{chai2021fedat,li2020efficient,xie2019asynchronous}, which enables devices to upload their updates periodically without strict synchronization. 
  However, the asynchrony usually results in divergent training, since the updates are obtained based on different global model and devices that can upload quickly tend to upload more frequently than those devices with slow upload speed. 
  Another alternative method for tackling straggler issue is device scheduling, which is applicable for synchronous FEEL.}
It suggests that the edge server should carefully choose the non-straggler devices to upload their model updates. 
On this basis, 
a device scheduling scheme was initially proposed in \cite{nishio2019client} to mitigate the straggler problem and sample devices based on their resource conditions. 
The authors in \cite{reisizadeh2020straggler} further proposed to start the federated learning process with the faster devices and gradually involve the stragglers to improve the model performance. 
In addition to active device sampling, a more practical and simpler way is to set a deadline for the uploading process to exclude the straggler devices \cite{reisizadeh2019robust,lee2021adaptive,nishio2019client}. 
The edge server will start the aggregation process after the deadline, the updates from the straggler devices are thus dropped. 
{
    The deadline can exclude the straggler devices effectively.  
    However, due to the unreliability of wireless environment, 
    even non-straggler devices may fail to deliver the local updates, 
    causing undesired   
    {transmission outages} \cite{chen2020joint} that hinder the convergence of the global model,  especially in the non independent and identically distributed (non-i.i.d.) data distribution settings. 
    The authors in \cite{wang2021quantized} jointly considered the impacts of data compression and transmission outage for compression ratio design, 
    and proposed to set the compression ratio to ensure equal outage probability across devices. 
    However, the deadline in \cite{wang2021quantized} is fixed and needs to be preset. 
    Given the transmission outage brought by the deadline, 
    it is desired to carefully set the deadline  for achieving communication-efficient and FEEL with fast convergence. 
    \rv{Furthermore, the likelihood of transmission outage is also affected by the amount of data being uploaded that depends on the compression ratio. 
    The collective influence on transmission outage probability arises a strong coupling between compression mechanism and the transmission deadline.} 
    However, 
    owing to the difficulty in quantifying the impacts of compression and deadline, prior works, to the best of our knowledge, focused on separate design (of either the deadline \cite{lee2021adaptive} or the compression ratio \cite{wang2021quantized,han2020adaptive}), or joint design with heuristic approaches \cite{reisizadeh2019robust}.  
    \rv{Unlike conventional hyperparameters (e.g., learning rate), the optimal values of compression ratio and deadline vary significantly across different learning tasks, wireless environments, and even training stages. Additionally, these parameters are interdependent, making manual tuning highly suboptimal and often leading to substantial performance degradation.
    Therefore, the joint design with theoretical guarantee remains a critical issue and warrants immediate attention.} 
    To close this research gap, we make the first attempt to formulate the joint design problem and solve it directly. 
    }

\subsection{Contribution and Organization}
In this paper, we consider a general FEEL system, which consists of multiple edge devices with heterogeneous communication capabilities and one edge server. At each round, the edge server needs to first determine a deadline for the local training and uploading process, and devices should compress the local update accordingly. 
We aim to formulate and solve the joint compression and deadline optimization problem with the objective of directly minimizing the total training time. 
The main contributions of this work are elaborated as follows.
\begin{itemize}
	\item{{\bf{Training time analysis:}} We present a communication-efficient FEEL system, where the devices compress the local update through sparsification and the server sets a deadline for the local update process to exclude the straggler devices. Realizing that the gradient compression error and transmission outage are side effect brought by the aforementioned two operations, 
    we proposed an asymptotically unbiased aggregation scheme.
    This is underpinned by a comprehensive theoretical analysis for their impact on convergence behavior, unveiling a complex interplay that significantly affects convergence speed. 
    }
	\item{{\bf{Training time minimization via joint optimization of compression ratio and deadline:}} Building upon the theoretical insights, we formulate the training time minimization problem by jointly optimizing the deadline and the sparsification ratio of the data compression. 
	The online optimization strategy is adopted to avoid non-causal information acquisition. 
    We solve the resulting problem in an alternating manner, i.e., optimizing one variable by considering the other to be given in each iteration. For the sub-problem of compression ratio optimization, the quasi closed-form  solution is derived for efficient calculation; For the sub-problem of deadline optimization, the optimal deadline can be obtained by bisection search efficiently, and it is found that the optimal deadline can adapt to the training stage and channel state through the dynamic change of a weight parameter. Iteratively solving the two sub-problems thus gives the proposed \emph{joint compression and deadline optimization} (JCDO) algorithm.}
	\item{{\bf{Performance evaluation:}} We conduct extensive numerical experiments to evaluate the performance of the proposed JCDO algorithm for different use cases in FEEL including image classification and autonomous driving. 
	It is shown that the proposed algorithm could greatly accelerate the vanilla FedSGD algorithm, and is robust against device heterogeneity (i.e., different computational capabilities or wireless channels) compared with the state-of-the-art schemes.}	
\end{itemize}

\emph{Organization:} The remainder of this paper is organized as follows. 
Section 2 introduces the learning and communication models. 
Section 3 gives the analysis of the errors introduced by compression and deadline setting, followed by the proposed unbiased aggregation scheme and convergence analysis.
Section 4 
formulates the training time minimization problem and proposes the JCDO algorithm to jointly optimize the compression ratio and deadline. 
Section 5 shows the experimental results, followed by concluding remarks in Section 6. 

\begin{figure*}[tt]
\centering
\includegraphics[width=14cm]{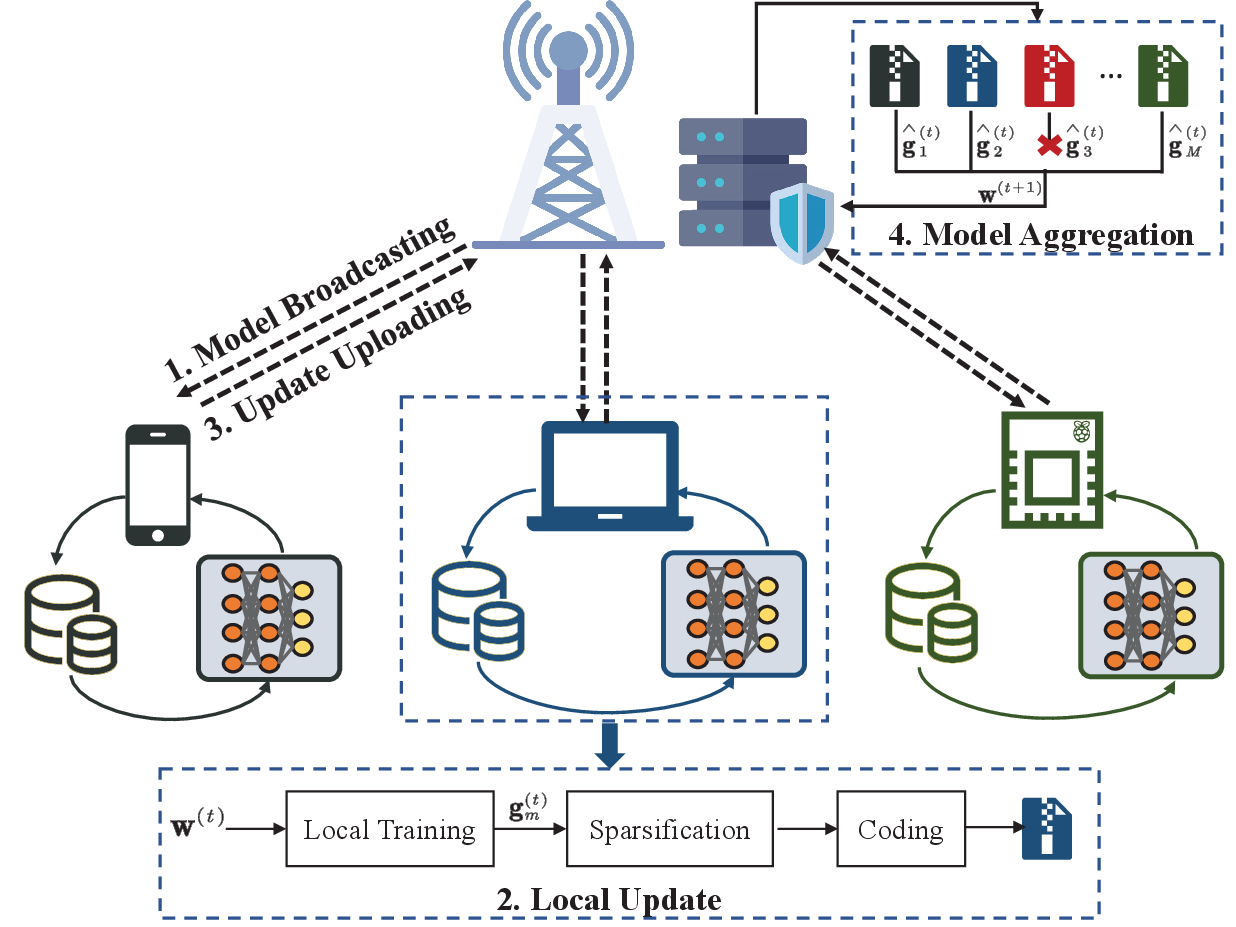}
\caption{Illustration of the federated edge learning system with stragglers}
\label{Fig:1}
\end{figure*}

\vspace{-2mm}
\section{System Models}\label{sec: system model}
We consider a synchronous \emph{ federated edge learning} (FEEL) system, as shown in Fig. \ref{Fig:1}, 
where there are one edge server and $M$ edge devices. 
The device set is denoted by $\mathcal{M}=\left\{1,2,3,...,M\right\}$. 
For each device $m$,  the private dataset comprises $d_m=\left|\mathcal{D}_m\right|$ pairs of training samples ${(\mathbf{x}_i,y_i)}_{i=1}^{d_m}$, where $\mathbf{x}_i$ is data vector and $y_i$ its label.  
We have global dataset $\left\{\mathcal{D}_m\right\}_{m=1}^M$ encompassing $d=\sum_{m=1}^M d_m$ training samples. 
\subsection{Learning Model}
The learning process is to minimize the global loss function in a distributed manner. 
Particularly, the global loss function on the entire distributed dataset is defined as 
\begin{align}\label{eq: the global loss function definition}
	L\left(\mathbf{w}\right) = \frac{1}{\sum_{m=1}^M d_m} \sum_{m=1}^M \sum_{\left(\mathbf{x}_j,y_j\right)\in \mathcal{D}_m} f\left(\mathbf{w};\mathbf{x}_j,y_j\right), 
\end{align}
where $\mathbf{w}$ denotes the global model, and $f\left(\mathbf{w};\mathbf{x}_j,y_j\right)$ is the sample loss quantifying the prediction error of $\mathbf{w}$ on the training sample $\left(\mathbf{x}_j,y_j\right)$. 
At each communication round $t$, 
we set a deadline $T_D^{(t)}$ to process the following steps and repeat them until the global model converges. 
\begin{itemize}
	\item{\bf{Global Model Broadcasting:}} The edge server broadcasts the current global model $\mathbf{w}^{(t)}$, and starts the countdown simultaneously. 
	\item{\bf{Local Model Training:}} Each device runs the \emph{stochastic gradient decent} (SGD) algorithm using its local dataset and the latest global model $\mathbf{w}^{(t)}$, and generates a local gradient estimate $\mathbf{g}_m^{(t)}$ as a surrogate of $\sum_{j=1}^{d_m}\nabla f\left(\mathbf{w}^{(t)};\mathbf{x}_j,y_j\right)$. 
	\item{\bf{Local Gradient Uploading:}} Each device compresses the local gradient as ${{\rm Comp} (\mathbf{g}_m^{(t)})}$, and then transmits it to the edge server. 
	\item{\bf{Global Model Updating:}} After the deadline, 
    the edge server update the global model by using the received gradients.  
\end{itemize}
{As illustrated above, the global model is distributed at the beginning of each round, then each device can improve the latest global model with the local dataset.\footnote{\rv{Note that our framework can be integrated with active device selection \cite{cui2022optimal,lai2021oort} by adjusting the model broadcasting process to only broadcast the global model to a selected set of devices. Subsequently, our deadline mechanism acts as a secondary selector, only considering devices that can successfully upload their update within the given deadline.}} Compared with the asynchronous scheme, the above method has a better convergence performance especially when tackling the non i.i.d. data distribution issue. 
}

In the Section \ref{subsec: compression model} and \ref{eq:subsec:unbiased aggregation}, we will detail the use of stochastic compression and the design of global update scheme respectively. 

\vspace{-2mm}
\subsection{Communication Model}\label{subsec: communication model}
The devices communicate to the server via \emph{orthogonal frequency-division multiple access} (OFDMA) channel. 
The available bandwidth is divided into $M$ sub-channels and each of them is allocated to a device.  
At each communication round $t$, each device uses the dedicated sub-channel $h_m^{(t)}$ with bandwidth $B$ to upload the compressed gradient ${\rm Comp}(\mathbf{g}_m^{(t)})$. 
We consider Rayleigh fading channel, i.e., $h_m^{(t)}\sim \mathcal{CN}(0, \sigma_m^2)$, where the channel coefficient is invariant within a communication round, and is \emph{independent and identically distributed} (i.i.d.) over the sequential rounds.
The achievable data rate of the channel between the user $m$ and the edge server is given by
\begin{align}\label{eq: data rate definition}
	C_m^{(t)} = B \log_2 \left(1 +\frac{P_m|h_m^{(t)}|^2}{BN_0}\right),  
\end{align} 
where $P_m$ is the transmit power and $N_0$ is the noise power spectral density.
\vspace{-4mm}
\subsection{Compression Model}\label{subsec: compression model}
\vspace{-1mm}
For a general vector $\mathbf{g}=[g_1,g_2,...,g_S]\in \mathbb{R}^S$, we compress $\mathbf{g}$ by 
an unbiased stochastic sparsification scheme as in \cite{wangni2018gradient}. 
Specifically, each element $g_i$ in $\mathbf{g}$ is preserved with probability $p_i$,  
and  
the sparsification function $\mathcal{S}\left(\cdot\right)$ is designed as  
\begin{align}\label{eq: definition of the compressor}
	\mathcal{S}\left(\mathbf{g}\right)=\left[Z_1\frac{g_1}{p_1},Z_2\frac{g_2}{p_2},...,Z_S\frac{g_S}{p_S}\right], 
\end{align}
where $Z_i$ is a \emph{random variable} (r.v.) following
Bernoulli distribution with mean $p_i$. 
We have a sparsity ratio $r$ as the expected ratio of elements to be preserved after compression.  
Note that, the gradient vector $\mathbf{g}$ usually has extremely high dimension in neural network. 
According to the \emph{law of large numbers} (LLN), we have $\lim_{S\rightarrow +\infty}\left\|{\rm Comp}(\mathbf{g},r)\right\|_0 /S = \sum_{i=1}^S p_i /S$. 
In the remaining of this paper, we consider the problem formulation in an asymptotic manner, with constraint on $\sum_{i=1}^S p_i/S \leq r$ due to the limited communication resources. 

The choice of $p_i$ is obtained by minimizing the variance of the compressed vector under the sparsity constraint as given below.
\begin{align}
	\begin{matrix}
		&\min\limits _{\left\{p_1,...,p_S\right\}}& \mathbb{E} \left\|\mathcal{S}\left(\mathbf{g}\right)-\mathbf{g}\right\|^2,\hfill \\
		&{\rm s.t.}  & \frac{\sum_{i=1}^S p_i}{S} \leq r, \hfill\\
		& &p_i \leq 1, \forall i. \hfill\\
	  \end{matrix}
	\end{align}
By using the \emph{Karush-Kuhn-Tucker} (KKT) conditions, we obtain the optimal solution  
$p_i^* = \min \{\frac{\left|g_i\right|}{\lambda},1\}$, where $\lambda$ is the Lagrange multiplier satisfying $\sum_{i=1}^S p_i^* = rS$. 

{After sparsification, 
the compression can be achieved through sparse vector coding. 
For a sparse vector $\mathbf{g}'=[g_1',g_2',...,g_S']$, we encode each non-zero element with its index and value, and upload the parameter set 
\begin{align}\label{eq:coding process}
	\mathcal{C}\left(\mathbf{g}'\right)= \left\{\left[i,g_i'\right]\vert g_i'\neq 0\right\}.
\end{align}
Therefore, for a given sparsity constraint $r$, using the aforementioned compression function ${\rm Comp}\left(\mathbf{g},r\right)=\mathcal{C}\left(\mathcal{S}\left(\mathbf{g},r\right)\right)$, the expected number of bits to be uploaded is $brS$, where $b$ is the number of bits for encoding a single non-zero element. 

\section{Unbiased Aggregation and Convergence Analysis}\label{sec:error analysis}
In this section, 
we first design an asymptotically unbiased aggregation scheme for global update, under which, we provide the convergence analysis for the FEEL system. 
\subsection{The Design of Unbiased Aggregation}\label{eq:subsec:unbiased aggregation}
The design of unbiased aggregation should take account into two aspects of randomness in local gradients. 
One comes from stochastic compression, and the other is due to the failure in transmission within the required deadline. 
In the following, we will analyze the two parts of randomness separately, in terms of compressed gradient error and its success transmission probability, which motivate the design of unbiased aggregation. 
\subsubsection{Gradient Compression Error}
As introduced in Section \ref{subsec: compression model}, the proposed compression includes two stages, e.g., sparsification and encoding. 
We consider error only incurred by sparsification, while the encoding is error-free \cite{wangni2018gradient}.  
The following lemma provides the statistical properties for the stochastic sparsification, which performs as a surrogate for the compression error. 
\begin{lemma}\label{lemma: cmopressor properties}
	\emph{(Statistical characteristics of the compression). The stochastic compression ${\rm Comp}=\mathcal{C}\left(\mathcal{S}\left(\cdot\right)\right)$  is unbiased and the variance is given by}
	\begin{align}
		\mathbb{E}\left[\left\|{\rm Comp}\left(\mathbf{g},r\right)-\mathbf{g}\right\|^2|\mathbf{g}\right]&= \delta\left\|\mathbf{g}\right\|^2. 
	\end{align}
\end{lemma}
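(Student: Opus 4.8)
The plan is to reduce everything to a coordinatewise computation, exploiting that the sparsification acts independently on each entry through the mutually independent Bernoulli variables $Z_i$, and that (as assumed) the subsequent index--value encoding $\mathcal{C}(\cdot)$ is lossless, so that $\mathrm{Comp}(\mathbf{g},r)-\mathbf{g}=\mathcal{S}(\mathbf{g})-\mathbf{g}$ and only the sparsification stage contributes error. First I would establish unbiasedness: conditioned on $\mathbf{g}$, the $i$-th entry of $\mathcal{S}(\mathbf{g})$ is $Z_i g_i/p_i$, so $\mathbb{E}[Z_i g_i/p_i \mid \mathbf{g}]=(g_i/p_i)\mathbb{E}[Z_i]=(g_i/p_i)p_i=g_i$ since $Z_i$ is Bernoulli with mean $p_i$. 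Collecting the coordinates yields $\mathbb{E}[\mathcal{S}(\mathbf{g})\mid\mathbf{g}]=\mathbf{g}$, which is the claimed unbiasedness.

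Next comes the variance. Because $\mathcal{S}(\mathbf{g})-\mathbf{g}$ has independent, mean-zero coordinates, the cross terms $\mathbb{E}[(Z_i g_i/p_i-g_i)(Z_j g_j/p_j-g_j)]$ factor and vanish for $i\neq j$, so the squared-norm expectation splits into a diagonal sum:
\begin{align}
\mathbb{E}\!\left[\|\mathcal{S}(\mathbf{g})-\mathbf{g}\|^2 \mid \mathbf{g}\right]=\sum_{i=1}^S g_i^2\,\mathbb{E}\!\left[\left(\tfrac{Z_i}{p_i}-1\right)^2\right]=\sum_{i=1}^S g_i^2\,\mathrm{Var}\!\left(\tfrac{Z_i}{p_i}\right).
\end{align}
Using $\mathrm{Var}(Z_i)=p_i(1-p_i)$ gives $\mathrm{Var}(Z_i/p_i)=(1-p_i)/p_i$, whence the variance equals $\sum_{i=1}^S \frac{1-p_i}{p_i}g_i^2=\sum_{i=1}^S \frac{g_i^2}{p_i}-\|\mathbf{g}\|^2$.

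Finally I would set $\delta:=\big(\sum_{i=1}^S g_i^2/p_i-\|\mathbf{g}\|^2\big)/\|\mathbf{g}\|^2$ to obtain the stated form $\delta\|\mathbf{g}\|^2$; substituting the optimal $p_i^*=\min\{|g_i|/\lambda,1\}$ from $\mathscr{P}_1$ collapses the sum to $\sum_{i:\,|g_i|<\lambda}|g_i|(\lambda-|g_i|)$, since the unsparsified coordinates (those with $p_i^*=1$) contribute nothing. The one point requiring care --- and what I would flag as the main, albeit mild, obstacle --- is arguing that $\delta$ is a genuine proportionality constant rather than a quantity secretly depending on the magnitude of $\mathbf{g}$. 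This follows from a scale-invariance observation: rescaling $\mathbf{g}\mapsto c\mathbf{g}$ forces $\lambda\mapsto c\lambda$ in order to keep $\sum_i p_i^*=rS$, leaving each $p_i^*$ unchanged; hence both the error variance and $\|\mathbf{g}\|^2$ scale by $c^2$, so $\delta$ depends only on the direction of $\mathbf{g}$ and the ratio $r$. This justifies treating $\delta$ as a fixed compression-error coefficient in the convergence analysis that follows.
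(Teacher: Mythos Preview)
Your coordinatewise computation of the unbiasedness and of the variance $\sum_i g_i^2(1/p_i-1)$ is exactly what the paper does, and your reduction to $\sum_{i:|g_i|<\lambda}|g_i|(\lambda-|g_i|)$ under the optimal $p_i^*$ matches the paper's exact expression for $F(r,\{|g_i|\})$. The scale-invariance argument you give --- that $\mathbf{g}\mapsto c\mathbf{g}$ forces $\lambda\mapsto c\lambda$ and leaves every $p_i^*$ unchanged --- is correct and is a clean way to see that $\delta$ depends only on the direction of $\mathbf{g}$ and on $r$; the paper does not make this observation explicit.

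However, there is a substantive omission. The lemma as stated includes the estimate $\delta\approx \frac{a}{r}-1$ with $a=\|\mathbf{g}\|_1^2/(S\|\mathbf{g}\|_2^2)$, and it is this closed form (not merely the existence of some scale-invariant $\delta$) that is used throughout the subsequent convergence analysis and optimization (Proposition~\ref{prop: convergence analysis}, Lemma~\ref{lemma: remaining communication rounds}, $\mathscr{P}_3$, $\mathscr{P}_4$). Your proposal stops at the exact but intractable expression and never derives or justifies the approximation. The paper obtains it by relaxing $\mathscr{P}_1$: dropping the constraint $p_i\leq 1$ yields $p_i^{**}=Sr|g_i|/\|\mathbf{g}\|_1$, under which the variance is exactly $(\frac{a}{r}-1)\|\mathbf{g}\|_2^2$; the approximation $F\approx F'$ is then validated empirically (their Fig.~\ref{Fig:sparse evaluation}). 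Without this step, you have not established the form of $\delta$ that the rest of the paper relies on, so the proof is incomplete for the lemma as actually used.
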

where $\delta$ is estimated by $\delta\approx \frac{a}{r}-1$, and $a = \frac{\left\|\mathbf{g}\right\|_1^2}{S\left\|\mathbf{g}\right\|_2^2}$. 

\begin{proof}
	See Appendix A.
\end{proof}
\subsubsection{Successful Transmission Probability}
At each communication round, each device needs to complete the global model receiving, updating, and uploading within the deadline $T_D^{(t)}$. 
The time cost of device $m$ in the  $t$-round  is given as follows. 
\begin{align}
	T_m^{(t)} = T_{{\sf B}}^{(t)} + T_{{\sf C},m}^{(t)} + T_{{\sf U},m}^{(t)} 
\end{align}
where $T_{{\sf B}}^{(t)}$, $T_{{\sf C},m}^{(t)}$, and $T_{{\sf U},m}^{(t)}$ denote the global model broadcasting time, the local computation time, and the local gradient uploading time of device $m$ at round $t$, respectively. 

Compared with computation time $T_{{\sf C},m}^{(t)}$ and uploading time $T_{{\sf U},m}^{(t)}$, the time consuming of model broadcasting is negligible due to the broadband downlink channel and the large transmitted power at the edge server \cite{zhang2022communication}.  
For simplicity, we neglect 
$T_{\sf B}^{(t)}$ in the remaining part of this paper. 
\rv{According to \cite{ren2020scheduling,tran2019federated}, 
let $\kappa$ denote the CPU cycles for a specified device to execute one batch of samples, 
the local computation time $T_{{\sf C},m}^{(t)}$ can be calculated by 
\begin{align}
    T_{{\sf C},m}^{(t)} = \frac{\kappa}{f_m}, 
\end{align}
where $f_m$ denotes the CPU frequency of device $m$. Note that we model the heterogeneity of computation by assigning devices with a different $f_m$.  
}As for the uploading time $T_{{\sf U},m}^{(t)}$, 
we compute it by using uploading bits and the data rate given in (\ref{eq: data rate definition}) as 
\begin{align}
    T_{{\sf U},m}^{(t)} = \frac{bs_m^{(t)}}{B \log_2 \left(1 + \frac{P_m|h_m^{(t)}|^2}{BN_0}\right)}, 
\end{align}
{where $b$ is the number of bits for encoding a single non-zero element, and $s_m^{(t)}$ is the number of non-zero elements after compression. 
}

As a result, we define the successful transmission by the event that the device $m$ completes gradient uploading within the deadline, and thus, the probability of successful transmission is  
\begin{align}\label{eq: definition of q}
    q_m^{(t)} \triangleq {\rm Pr}\left(T_{{\sf U},m}^{(t)}\leq T_D^{(t)}-T_{{\sf C},m}^{(t)}\right), 
\end{align}
which is determined by the joint distribution of two random variables, $h_m^{(t)}$ and $s_m^{(t)}$. 
To obtain a closed form solution, we consider the asymptotic result under the condition of $S \rightarrow \infty$, representing the large scale learning model. 
With the aforementioned channel model in Section \ref{subsec: communication model}, we have the following Lemma. 
{\begin{lemma}\label{lemma: transmission outage probability}
	\emph{(Successful Transmission Probability). 
    For device $m$ at each communication round $t$, the successful transmission probability $q_m^{(t)}$ is given by 
    }
	\begin{align}\label{eq: the computation of q}
		\lim_{S\rightarrow \infty} q_m^{(t)} = {\rm exp}\Bigg[{-\frac{BN_0}{P_m\sigma_m^2}\bigg(2^{\frac{bSr_m^{(t)}}{B\left(T_D^{(t)}-\frac{\kappa}{f_m}\right)}}-1\bigg)}\Bigg], 
	\end{align}
    \emph{where $r_m^{(t)}$ is the sparsity ratio of device $m$ at round $t$. }
\end{lemma}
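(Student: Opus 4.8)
The plan is to collapse the two sources of randomness in $q_m^{(t)}$—the stochastic nonzero-count $s_m^{(t)}$ produced by the sparsifier and the Rayleigh fading gain $h_m^{(t)}$—into a single exponential tail probability. First I would substitute the explicit expressions for $T_{{\sf U},m}^{(t)}$ and $T_{{\sf C},m}^{(t)}$ into the definition \eqref{eq: definition of q}, so that the success event reads
\begin{align}
\frac{b s_m^{(t)}}{B \log_2\left(1 + \frac{P_m |h_m^{(t)}|^2}{BN_0}\right)} \leq T_D^{(t)} - \frac{\kappa}{f_m}.
\end{align}
Since the logarithmic denominator is strictly positive, this inequality rearranges into a lower bound on $\log_2(1+\cdots)$, i.e. a threshold condition on the instantaneous channel gain $|h_m^{(t)}|^2$.

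The second step removes the randomness in $s_m^{(t)}$ through the law of large numbers. By the compression model of Section \ref{subsec: compression model}, $s_m^{(t)} = \|{\rm Comp}(\mathbf{g}_m^{(t)},r_m^{(t)})\|_0$ is a sum of $S$ independent Bernoulli indicators whose means satisfy $\sum_i p_i^* = S r_m^{(t)}$, the sparsity constraint being tight at the optimum of $\mathscr{P}_1$. Hence $s_m^{(t)}/S \to r_m^{(t)}$ almost surely as $S \to \infty$, so in the limit the numerator $b s_m^{(t)}$ may be replaced by the deterministic quantity $bSr_m^{(t)}$. Because $s_m^{(t)}$ and $h_m^{(t)}$ are independent, I would make this substitution rigorous by conditioning on $h_m^{(t)}$ and invoking a bounded-convergence argument on the conditional success indicator.

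The final step exploits the fading statistics. After isolating $|h_m^{(t)}|^2$ and freezing the numerator at its limit, the success event becomes
\begin{align}
|h_m^{(t)}|^2 \geq \frac{BN_0}{P_m}\left(2^{\frac{bSr_m^{(t)}}{B\left(T_D^{(t)} - \frac{\kappa}{f_m}\right)}} - 1\right).
\end{align}
Since $h_m^{(t)} \sim \mathcal{CN}(0,\sigma_m^2)$, the gain $|h_m^{(t)}|^2$ is exponentially distributed with mean $\sigma_m^2$, whose complementary CDF is $\Pr(|h_m^{(t)}|^2 \geq x) = \exp(-x/\sigma_m^2)$. Evaluating this at the threshold above yields the claimed closed form directly.

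The main obstacle is the interchange of the $S\to\infty$ limit with the probability operator in the second step: one must argue that the concentration of $s_m^{(t)}/S$ at $r_m^{(t)}$ permits the stochastic numerator to be frozen at its mean without perturbing the limiting success probability. Conditioning on the independent fading variable and applying dominated/bounded convergence handles this cleanly, after which everything reduces to elementary algebra and a single exponential-tail evaluation.
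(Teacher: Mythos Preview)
Your proposal is correct and follows essentially the same route as the paper: rearrange the success event into a threshold on $|h_m^{(t)}|^2$, use that $|h_m^{(t)}|^2\sim\mathrm{Exp}(1/\sigma_m^2)$ to get the exponential tail, and invoke the LLN to replace $s_m^{(t)}$ by $Sr_m^{(t)}$ in the limit. If anything, you are more careful than the paper about justifying the interchange of limit and probability via conditioning and bounded convergence; the paper simply states ``the desired result is obtained by LLN'' without further comment.
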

}
\begin{proof}
	According to (\ref{eq: definition of q}), we have 
	\begin{align}
		\lim_{S\rightarrow \infty}q_m^{(t)} &= \lim_{S\rightarrow\infty}{\rm Pr}\Bigg[\left|h_m^{(t)}\right|^2\geq \frac{BN_0}{P_m}\bigg(2^{\frac{bs_m^{(t)}}{B\left(T_D^{(t)}-\frac{\kappa}{f_m}\right)}}-1\bigg)\Bigg] \notag\\
		&=\lim_{S\rightarrow\infty} {\rm exp}\Bigg[{-\frac{BN_0}{P_m\sigma_m^2}\bigg(2^{\frac{bs_m^{(t)}}{B\left(T_D^{(t)}-\frac{\kappa}{f_m}\right)}}-1\bigg)}\Bigg]\label{transmission probability intermideate results}
    \end{align}
	where (\ref{transmission probability intermideate results}) is because $|h_m^{(t)}|^2\sim {\rm Exp}(\lambda=\frac{1}{\sigma_m^2})$, and the desired result in Lemma \ref{lemma: transmission outage probability} is obtained by LLN. 
\end{proof}
\begin{remark}
    \rv{\emph{Lemma 2 explicitly demonstrates that the successful transmission probability is determined by various factors, namely the computation ability ($f_m$), communication capacity ($\frac{BN_0}{P_m\sigma_m^2}$), uploading data amount ($bSr_m^{(t)}$) controlled by the compression mechanism and deadline. Notably, the achievable outcomes, which include the compression mechanism and deadline, significantly impact the final value of $q_m^{(t)}$. Specifically, the choice of compression mechanism affects the actual uploading time, while the transmission deadline sets the permissible time window. It is important to note that optimizing one aspect without considering the other would not lead to optimal control over transmission outage. This highlights the strong coupling between compression and deadline and emphasizes the necessity of a joint optimization strategy.}}
\end{remark}
\subsubsection{Aggregation Design}
To propose the unbiased aggregation scheme as in \cite{zhang2022communication,ren2020scheduling}, we write the received local gradient as the following form 
\begin{align}\label{eq:received gradient expression}
	\widehat{\mathbf{g}}_m^{(t)} = Y_m^{(t)}\left(\mathbf{g}_m^{(t)}+\mathbf{e}_m^{(t)}\right), 
\end{align}
where $\mathbf{e}_m^{(t)}$ denotes the {gradient compression error} with mean and variance given in Lemma \ref{lemma: cmopressor properties}.  
$Y_m^{(t)}$ is an $0$-$1$ indicator of successful transmission, and $Y_m^{(t)}=0$ means {transmission outage}  occurs.  
It follows
Bernoulli distribution with the mean equal to 
successful transmission probability $q_m^{(t)}$ as provided in Lemma \ref{lemma: transmission outage probability}. 

With the form of received local gradients in (\ref{eq:received gradient expression}), we propose the asymptotically unbiased aggregation as 
\begin{align}\label{eq:aggregation scheme}
	\widehat{\mathbf{g}}^{(t)} = \sum_{m=1}^M \frac{d_m}{q_m^{(t)}d}\widehat{\mathbf{g}}_m^{(t)}.
\end{align}
The proposed design is obtained by the unbiased compression, and by LLN to decouple the dependence of compression and transmission outage as\footnote{{Note that, although $\widehat{\mathbf{g}}_m^{(t)}$ is unbiased asymptotically (i.e., $S\rightarrow +\infty$), it generally matches well since $S$ in real FEEL scenario is usually large, e.g., $S=2.56\times 10^{7}$ in ResNet50 \cite{he2016deep}.}} 
\begin{align}	
    \lim_{S\rightarrow\infty}\mathbb{E}\left(\widehat{\mathbf{g}}^{(t)}\right)&=\lim_{S\rightarrow\infty}\sum_{m=1}^M\mathbb{E}_{Y_m^{(t)}}\mathbb{E}_{\mathbf{e}_m^{(t)}}\Big[\frac{d_mY_m^{(t)}}{dq_m^{(t)}}\left(\mathbf{g}_m^{(t)}+\mathbf{e}_m^{(t)}\right)\Big]\notag\\
    =&~~\lim_{S\rightarrow\infty}\sum_{m=1}^M\frac{d_m}{dq_m^{(t)}}\mathbb{E}_{Y_m^{(t)}}\left[Y_m^{(t)}\right]\mathbb{E}_{\mathbf{e}_m^{(t)}}\left[\mathbf{g}_m^{(t)}+\mathbf{e}_m^{(t)}\right]\label{eq: intermideate results of unbiased scheme}\notag\\
    &=\sum_{m=1}^M \frac{d_m}{d}\mathbf{g}_m^{(t)}\notag=\mathbf{g}^{(t)}\notag.
\end{align}
\vspace{-5mm}
\subsection{Convergence Analysis}

To begin with, four standard assumptions on the loss function are made. 
\begin{assumption}
	The loss function $L\left(\cdot\right)$ is $\ell$-smooth, i.e., $\forall$ $\mathbf{u}$ and $\mathbf{v}$, 
	$L\left(\mathbf{u}\right)\leq L\left(\mathbf{v}\right) + \nabla L\left(\mathbf{v}\right)^T \left(\mathbf{u} - \mathbf{v}\right)+ \frac{\ell}{2}\left\|\mathbf{u}-\mathbf{v}\right\|^2$.  
\end{assumption}
\begin{assumption}
	The loss function $L\left(\cdot\right)$ is $\mu$-strongly-convex, i.e., $\forall~\mathbf{u}$ and $\mathbf{v}$, 
	$L\left(\mathbf{u}\right)\geq L\left(\mathbf{v}\right) + \nabla L\left(\mathbf{v}\right)^T \left(\mathbf{u} - \mathbf{v}\right)+ \frac{\mu}{2}\left\|\mathbf{u}-\mathbf{v}\right\|^2$.   
\end{assumption}
\begin{assumption}
	The stochastic gradients obtained through SGD algorithm are unbiased and the variance is bounded, i.e., $\forall m$ and $\mathbf{w}$, $\mathbb{E}[\mathbf{g}_m^{(t)}]=\nabla L_m\left(\mathbf{w},\mathcal{D}_m\right)$, and $\mathbb{E}[\|\mathbf{g}_m^{(t)}-\nabla L_m\left(\mathbf{w},\mathcal{D}_m\right)\|^2]\leq \sigma^2$. 
\end{assumption}
\begin{assumption}\label{assumption: gradient bound}
    {The expected norm of the stochastic gradient $\mathbf{g}_m^{(t)}$ is upper bounded by $\mathbb{E}\|\mathbf{g}_m^{(t)}\|_2^2\leq G, \forall m,t.$}
\end{assumption}
Based on the above assumptions, 
we derive the convergence result for future $l$ iterations at each round $t$ as below. 
\begin{proposition}\label{prop: convergence analysis}
\emph{ For FEEL system with sparse coefficient $\{r_m^{(n)},n\geq t, m=1,...,M\}$, the deadline $\{T_D^{(n)},n\geq t\}$, the fractional decay learning rate $\eta^{(t)}=\frac{\chi}{t+\nu}\leq \frac{1}{2L}$ and the current global model $\mathbf{w}^{(t)}$. At each round $t$, by updating $l$ more rounds, the average optimality gap is upper bound by}
\begin{align}
	\mathbb{E}&\left(L\left(\mathbf{w}^{(t+l)}\right)-L\left(\mathbf{w}^*\right)\right) \notag
	\\
    &\leq \frac{1}{t+\nu+l}\bigg(\underbrace{{\frac{\ell(t+\nu)}{\mu}\left(L\left(\mathbf{w}^{(t)}\right)-L\left(\mathbf{w}^*\right)\right)}}_{(a)}\notag\\&+\underbrace{A\sum_{m=1}^M\frac{d_m^2}{d^2}\sigma^2}_{(b)}+\underbrace{AG\sum_{m=1}^M\big(\max_{n=t,...,t+l}\frac{a_m^{(n)}}{r_m^{(n)}q_m^{(n)}}-1\big)}_{(c)}\bigg), 
\end{align}
\emph{where $\chi$ and $\nu$ are hyperparameters to control the learning rate, $A = \frac{\ell \chi^2 }{ \left(3\mu \chi -2\right)}$, 
}
$a_m^{(n)}=\frac{\left\|\mathbf{g}_m^{(n)}\right\|_1^2}{S\left\|\mathbf{g}_m^{(n)}\right\|_2^2}$. 
\end{proposition}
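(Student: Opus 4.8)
The plan is to run the textbook smoothness-plus-strong-convexity descent argument, with essentially all of the novelty concentrated in controlling the second moment of the aggregated gradient $\widehat{\mathbf{g}}^{(t)}$, which is where the compression error and the transmission outage enter. First I would apply $\ell$-smoothness (Assumption 1) to the update $\mathbf{w}^{(t+1)}=\mathbf{w}^{(t)}-\eta^{(t)}\widehat{\mathbf{g}}^{(t)}$ to obtain
\[
L(\mathbf{w}^{(t+1)})\le L(\mathbf{w}^{(t)})-\eta^{(t)}\nabla L(\mathbf{w}^{(t)})^T\widehat{\mathbf{g}}^{(t)}+\tfrac{\ell(\eta^{(t)})^2}{2}\|\widehat{\mathbf{g}}^{(t)}\|^2 .
\]
Taking expectation and invoking the asymptotic unbiasedness $\mathbb{E}[\widehat{\mathbf{g}}^{(t)}]=\nabla L(\mathbf{w}^{(t)})$ established in Section \ref{eq:subsec:unbiased aggregation}, the first-order term collapses to $-\eta^{(t)}\|\nabla L(\mathbf{w}^{(t)})\|^2$.

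The crux, and the main obstacle, is bounding $\mathbb{E}\|\widehat{\mathbf{g}}^{(t)}\|^2$, which requires disentangling three independent sources of randomness: the SGD sampling, the stochastic sparsification, and the Bernoulli transmission outage $Y_m^{(t)}$. Writing $\widehat{\mathbf{g}}^{(t)}=\sum_m \tfrac{d_m Y_m^{(t)}}{d\,q_m^{(t)}}(\mathbf{g}_m^{(t)}+\mathbf{e}_m^{(t)})$ and using independence across devices (guaranteed by the orthogonal OFDMA channels), I would expand the squared norm into cross terms and diagonal terms. The cross terms reassemble into $\|\nabla L(\mathbf{w}^{(t)})\|^2-\sum_m \tfrac{d_m^2}{d^2}\|\nabla L_m\|^2$, since $\mathbb{E}[Y_m^{(t)}]=q_m^{(t)}$ and the compression is zero-mean. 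For each diagonal term I would use $\mathbb{E}[(Y_m^{(t)})^2]=\mathbb{E}[Y_m^{(t)}]=q_m^{(t)}$ (Bernoulli), independence of $Y_m^{(t)}$ and $\mathbf{e}_m^{(t)}$, and the variance $\delta\|\mathbf{g}\|^2$ from Lemma \ref{lemma: cmopressor properties} to get $\mathbb{E}\|\mathbf{g}_m^{(t)}+\mathbf{e}_m^{(t)}\|^2=\tfrac{a_m^{(t)}}{r_m^{(t)}}\mathbb{E}\|\mathbf{g}_m^{(t)}\|^2$ (the LLN/$S\to\infty$ decoupling is what justifies factoring the outage indicator out of the compression expectation). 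Combining with Assumption 3 ($\mathrm{Var}\le\sigma^2$) and Assumption 4 ($\mathbb{E}\|\mathbf{g}_m^{(t)}\|^2\le G$), and bounding the per-device weight $\tfrac{d_m^2}{d^2}\le 1$ on the (nonnegative) outage factor to match the stated form, yields
\[
\mathbb{E}\|\widehat{\mathbf{g}}^{(t)}\|^2\le \|\nabla L(\mathbf{w}^{(t)})\|^2+\sum_m \tfrac{d_m^2}{d^2}\sigma^2+G\sum_m\Big(\tfrac{a_m^{(t)}}{r_m^{(t)}q_m^{(t)}}-1\Big).
\]

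Substituting this into the descent inequality, I would use the step-size condition $\eta^{(t)}\le \tfrac{1}{2\ell}$ to bound $1-\tfrac{\ell\eta^{(t)}}{2}\ge \tfrac34$, so that the net coefficient of $-\|\nabla L(\mathbf{w}^{(t)})\|^2$ is at least $\tfrac34\eta^{(t)}$, and then apply the strong-convexity consequence $\|\nabla L(\mathbf{w})\|^2\ge 2\mu(L(\mathbf{w})-L(\mathbf{w}^*))$. With $\Delta^{(n)}:=\mathbb{E}L(\mathbf{w}^{(n)})-L(\mathbf{w}^*)$ and $\eta^{(t)}=\tfrac{\chi}{t+\nu}$ this produces the one-step recursion
\[
\Delta^{(n+1)}\le \Big(1-\tfrac{3\mu\chi}{2(n+\nu)}\Big)\Delta^{(n)}+\tfrac{\ell\chi^2}{2(n+\nu)^2}B^{(n)},\qquad B^{(n)}=\sum_m \tfrac{d_m^2}{d^2}\sigma^2+G\sum_m\Big(\tfrac{a_m^{(n)}}{r_m^{(n)}q_m^{(n)}}-1\Big).
\]

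Finally I would bound each $B^{(n)}$ by $\bar B$, replacing the compression/outage factor of each device by its maximum over $n\in\{t,\dots,t+l\}$ (this is exactly what produces the $\max$ in term (c)), and unroll the recursion from $n=t$ to $n=t+l$ by induction on the horizon $j$. With $c=\tfrac{3\mu\chi}{2}$ and $D=\tfrac{\ell\chi^2}{2}\bar B$, the hypothesis $\Delta^{(t+j)}\le \tfrac{E}{t+\nu+j}$ for $E=\tfrac{\ell(t+\nu)}{\mu}\Delta^{(t)}+A\bar B$ closes: the base case $j=0$ uses $\tfrac{\ell}{\mu}\ge 1$, and the inductive step uses that $3\mu\chi>2$ (equivalently $c>1$, which also makes $A=\tfrac{\ell\chi^2}{3\mu\chi-2}>0$ well defined) to guarantee $E\ge \tfrac{D}{c-1}=A\bar B$ at every step. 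Evaluating at $j=l$ gives precisely the stated bound, with term (a) from the initial gap, term (b) from the SGD variance, and term (c) from compression and outage. The two delicate points are the $S\to\infty$ decoupling in the second-moment step and tuning the recursion so the constants land exactly on $A$ and $\tfrac{\ell(t+\nu)}{\mu}$.
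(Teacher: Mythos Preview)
Your proof is correct, but it takes a genuinely different route from the paper. The paper works in parameter space: it runs the recursion on $\Delta_n=\mathbb{E}\|\mathbf{w}^{(n)}-\mathbf{w}^*\|^2$, and obtains the one-step contraction $\Delta_{n+1}\le(1-\tfrac{3\mu}{2}\eta^{(n)})\Delta_n+(\eta^{(n)})^2 D_n$ by introducing two intermediate iterates $\bar{\mathbf{w}}^{(n+1)}$ (outage removed) and $\tilde{\mathbf{w}}^{(n+1)}$ (outage and compression removed) and telescoping $\|\mathbf{w}^{(n+1)}-\mathbf{w}^*\|^2$ through them via a Pythagorean identity that holds because each layer is conditionally unbiased. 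The three resulting terms isolate outage, compression, and SGD contributions in separate sub-lemmas; only at the very end does the paper convert to function values using $L(\mathbf{w})-L(\mathbf{w}^*)\le\tfrac{\ell}{2}\|\mathbf{w}-\mathbf{w}^*\|^2$ and $\|\mathbf{w}^{(t)}-\mathbf{w}^*\|^2\le\tfrac{2}{\mu}(L(\mathbf{w}^{(t)})-L(\mathbf{w}^*))$. You instead stay in function space throughout: your Lyapunov is $L(\mathbf{w}^{(n)})-L(\mathbf{w}^*)$ directly, your one-step inequality comes from the descent lemma plus the PL inequality, and your noise control is a single second-moment bound on $\mathbb{E}\|\widehat{\mathbf{g}}^{(t)}\|^2$ via an explicit cross-term/diagonal expansion across devices. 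Both approaches land on the same recursion shape $(1-\tfrac{3\mu\chi}{2(n+\nu)})\cdot+\tfrac{\text{const}}{(n+\nu)^2}$ and use the identical induction to close, with the constant $A=\tfrac{\ell\chi^2}{3\mu\chi-2}$ emerging either way. The paper's layered decomposition makes the three error sources more transparent (useful for the downstream optimization in Section~\ref{sec:problem formulation}); your approach is more economical, avoiding the intermediate iterates and the two strong-convexity/smoothness conversions.
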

\begin{proof}
	See Appendix B. 
\end{proof}
\begin{remark}\label{remark: convergence analysis}
	\emph{(Convergence analysis). Proposition 1 describes the convergence rate of FEEL based on gradient compression in digital wireless communication scenarios. 
	The upper bound is composed of three terms. 
	The terms $(a)$ is an inherent term, which is determined by the optimality gap between the current loss and global minimum. 
	The term $(b)$ is caused by the gradient variance introduced by the SGD algorithm. 
	\rv{The term $(c)$ is our main concern. 
	This term quantifies the impact of compression and transmission outage on convergence. 
    Specifically, a small $r_m^{(n)}$ implies a large compression error as characterized in Lemma \ref{lemma: cmopressor properties}, and a small $q_m^{(n)}$ means a large outage probability. 
    In essence, both the compression error and outage rate affect the expected convergence performance. 
    They contribute to widening the optimality gap between the converged model and the optimal model by 
    increasing the mixed term $\frac{1}{r_m^{(n)}q_m^{(n)}}$.
    }
	}
\end{remark}
\section{Joint Compression and Deadline Optimization}\label{sec:problem formulation}
In this section, 
we leverage Proposition \ref{prop: convergence analysis} to address the problem of minimizing the training time over the compression ratio and deadline setting. 
The problem is solved in an alternating manner.   
We further conclude by presenting the proposed \emph{joint compression and deadline optimization} (JCDO) algorithm. 
\subsection{Problem Formulation}
To begin with, we consider the metric $N_{\epsilon}^{(t)}$ as the number of rounds to achieve $\epsilon$-\emph{accuracy} when sitting at round-$t$, as follows. 
\begin{align}
    L\left(\mathbf{w}^{(t+N_{\epsilon}^{(t)})}\right)-L\left(\mathbf{w}^{(*)}\right)\leq \epsilon. 
\end{align}
\rv{For the FEEL framework depicted in Fig. \ref{Fig:1}, the primary objective is to minimize the training time required for the global model to converge. 
We in this paper try to achieve this by performing data compression at the device side and setting a deadline at the server side. 
Given the compression scheme and deadline scheme in Section \ref{sec: system model}, the compression ratio and the value of deadline need to be determined. 
The overall programming problem is given by
}\begin{align}
    \mathscr{P}_1:&\min\limits _{\{r_m^{(n)}\}_{n=1,m=1}^{N,M},\{T_{D}^{(n)}\}_{n=1}^N}&&{\displaystyle\sum_{n=1}^{N} T_D^{(n)}},\\
    &\qquad\qquad~{\rm s.t.} &&\mathbb{E}\big\{L\big(\mathbf{w}^{(N)}\big)-L\left(\mathbf{w}^*\right)\big\}\leq \epsilon, \notag\\
    & &&T_{D}^{(n)} \geq T_{{\sf C},m}^{(n)}, \forall n, m, \notag\\
    & && 0< r_m^{(n)} \leq 1, \forall n, m, \notag
\end{align}
    where   
    $T_{D}^{(n)}$ denotes the deadline set by the edge server at round $n$. 
    As we focus on the uploading process, we assume that each device should at least have a certain time to upload its local update, that is, $T_D^{(n)}$ should be larger than the maximum of the local computation time. 
    \rv{$N$ denotes the specific round at which the global model converges. 
    We note that the value of $N$ is intrinsically linked to the optimality gap presented in Proposition \ref{prop: convergence analysis}, and a higher value of $N$ means an increased optimality gap. 
    }
    
    $\mathscr{P}_1$ is not tractable. Because the convergence manner cannot be well captured at the initial round, and the non-causal information  including $\{\{a_m^{(n)}\},\{h_m^{(n)}\},n>1\}$  is required. 
    We thus consider optimizing $\{r_m^{(n)}\}_{n=1,m=1}^{N,M}$ and $\{T_{D}^{(n)}\}_{n=1}^N$ in an online manner, 
    that is,  the compression ratio and deadline are optimized at the beginning of each round. 
    Moreover, 
    We adopt the idea of dynamic programming to make the solving process causal:  
    When sitting at round $t$ for determining the compression ratio and deadline, we assume the same strategy is adopted in the subsequent rounds, that is, $r_m^{(n)}=r_m^{(t)}$ and $T_D^{(n)}=T_D^{(t)}, \forall n\geq t$. 
    Then the problem instance for round $t$ is given by
\begin{align}
    \mathscr{P}_2:&\min\limits _{\{r_m^{(t)}\}_{m=1}^M,{T_{D}^{(t)}}}&& {\displaystyle\sum_{n=t}^{t+N_{\epsilon}^{(t)}}T_D^{(n)}},\hfill\\
    &~~~~~~~{\rm s.t.} &&\mathbb{E}\big\{L\big(\mathbf{w}^{(t+N_{\epsilon}^{(t)})}\big)-L\left(\mathbf{w}^*\right)\big\}\leq \epsilon,\hfill \notag\\
    & &&T_{D}^{(t)} \geq T_{{\sf C},m}^{(t)},T_{D}^{(n)}=T_D^{(t)}, \forall n\geq t, m, \hfill\notag\\
    & && 0< r_m^{(t)} \leq 1,r_m^{(n)}=r_m^{(t)}, \forall n\geq t, m.  \hfill \notag
\end{align}
Furthermore,  the main challenge is to derive $N_{\epsilon}^{(t)}$. The additional assumption is made as follows. 
\begin{assumption}\label{assumption: gradient norm bounded}
    {  Assume the FEEL system has updated $t$ rounds, $a_m^{(n)}$ in the following rounds is upper bounded, i.e., $a_m^{(n)}\leq \alpha_m^{(t)}, n=t,t+1,...,t+N_{\epsilon}^{(t)}.$}
\end{assumption}
As the same strategy is adopted in the remaining rounds, 
combined with Assumption \ref{assumption: gradient norm bounded}, the maximize operation in term $(c)$ of Proposition \ref{prop: convergence analysis} can be safely removed. 
We have the following lemma.
\begin{lemma}\label{lemma: remaining communication rounds}
    \emph{For FEEL system with sparse coefficient $\{r_m^{(n)}=r_m^{(t)},n\geq t,m=1,...,M\}$, the deadline $\{T_D^{(n)}=T_D^{(t)}, n\geq t\}$, and the fractional decay learning rate $\eta^{(t)}=\frac{\chi}{t+\nu}\leq \frac{1}{2L}$. 
    To meet the convergence requirement $\epsilon$, the number of remaining communication rounds should be carried out is upper bounded by}  
    \begin{align}
        N_{\epsilon}^{(t)}&\leq \frac{\ell (t+\nu)}{\mu \epsilon}\left(L(\mathbf{w}^{(t)})-L(\mathbf{w}^*)\right)-t-\nu \notag\\&~~~+ \frac{A}{\epsilon}\sum_{m=1}^M\frac{d_m^2}{d^2}\sigma^2 + \frac{AG}{\epsilon}\sum_{m=1}^M\frac{d_m^2}{d^2}\bigg(\frac{\alpha_m^{(t)}}{r_m^{(t)}q_m^{(t)}}-1\bigg).
    \end{align}
\end{lemma}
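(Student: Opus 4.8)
The plan is to invert the convergence bound of Proposition~\ref{prop: convergence analysis}, instantiated with the look-ahead horizon $l=N_\epsilon^{(t)}$. First I would apply Proposition~\ref{prop: convergence analysis} at round $t$ with $l=N_\epsilon^{(t)}$, so that its left-hand side becomes exactly the optimality gap $\mathbb{E}(L(\mathbf{w}^{(t+N_\epsilon^{(t)})})-L(\mathbf{w}^*))$ that the definition of $N_\epsilon^{(t)}$ and the $\epsilon$-accuracy requirement constrain.

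The next step is to discharge the maximum appearing in term $(c)$. Under the dynamic-programming hypothesis of $\mathscr{P}_2$ the strategy is frozen, i.e. $r_m^{(n)}=r_m^{(t)}$ and $T_D^{(n)}=T_D^{(t)}$ for all $n\geq t$; since by Lemma~\ref{lemma: transmission outage probability} the success probability $q_m^{(n)}$ depends on the round index $n$ only through $r_m^{(n)}$, $T_D^{(n)}$ and the round-invariant channel statistics, this freezing yields $q_m^{(n)}=q_m^{(t)}$. Combining this with Assumption~\ref{assumption: gradient norm bounded}, which bounds $a_m^{(n)}\leq \alpha_m^{(t)}$ for every $n=t,\dots,t+N_\epsilon^{(t)}$, each summand obeys $\frac{a_m^{(n)}}{r_m^{(n)}q_m^{(n)}}=\frac{a_m^{(n)}}{r_m^{(t)}q_m^{(t)}}\leq \frac{\alpha_m^{(t)}}{r_m^{(t)}q_m^{(t)}}$, so the maximum over $n$ is replaced by $\frac{\alpha_m^{(t)}}{r_m^{(t)}q_m^{(t)}}$. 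After this substitution the entire numerator on the right-hand side of Proposition~\ref{prop: convergence analysis}---call it $\Phi$---no longer depends on $l$, and the bound reads $\mathbb{E}(L(\mathbf{w}^{(t+l)})-L(\mathbf{w}^*))\leq \Phi/(t+\nu+l)$.

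It then suffices to enforce the $\epsilon$-accuracy requirement $\Phi/(t+\nu+l)\leq\epsilon$. Because $\Phi$ is now $l$-independent, the right-hand side is strictly decreasing in $l$, so the inequality holds for every horizon above the threshold $l\geq \Phi/\epsilon-t-\nu$; hence $\epsilon$-accuracy is guaranteed as soon as the horizon reaches this value, which gives $N_\epsilon^{(t)}\leq \Phi/\epsilon-t-\nu$. Expanding $\Phi$ and distributing the factor $1/\epsilon$ across the three numerator terms produces exactly the four stated terms, with the inherent term scaled by $\frac{\ell(t+\nu)}{\mu\epsilon}$, the $-t-\nu$ offset, and the variance and compression/outage terms each carrying the weight $\frac{d_m^2}{d^2}$.

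I expect the only delicate point to be the consistency of the maximum-removal step: Assumption~\ref{assumption: gradient norm bounded} must bound $a_m^{(n)}$ over precisely the window $n=t,\dots,t+N_\epsilon^{(t)}$ that the maximum in Proposition~\ref{prop: convergence analysis} ranges over when $l=N_\epsilon^{(t)}$, which is exactly what makes $\Phi$ genuinely independent of $l$ and legitimizes the monotone inversion. I would state this window matching explicitly so that the bound on $N_\epsilon^{(t)}$ and the range over which the assumption is invoked do not become circular; everything else reduces to routine algebraic rearrangement of the Proposition~\ref{prop: convergence analysis} bound.
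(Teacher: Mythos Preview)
Your proposal is correct and follows essentially the same approach as the paper's proof: apply Proposition~\ref{prop: convergence analysis}, use the frozen strategy together with Assumption~\ref{assumption: gradient norm bounded} to replace the maximum in term $(c)$ by $\frac{\alpha_m^{(t)}}{r_m^{(t)}q_m^{(t)}}$, then set the resulting $l$-independent numerator over $(t+\nu+l)$ equal to $\epsilon$ and solve for $l$. The paper's version is simply the terse form of your argument, and your explicit justification for why $q_m^{(n)}=q_m^{(t)}$ and your remark on the window-matching issue are in fact more careful than what the paper writes.
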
 
\begin{proof}
    See Appendix C. 
\end{proof}

We estimate $N_{\epsilon}^{(t)}$ by its upper bound in Lemma \ref{lemma: remaining communication rounds}. 
Then the problem $\mathscr{P}_2$ is reduced to 
\begin{align}
    \begin{matrix}
        \mathscr{P}_3:&\min\limits _{\left\{r_m^{(t)}\right\},T_{D}^{(t)}}& T_{D}^{(t)}\left(B_t + \sum_{m=1}^M\frac{d_m^2}{d^2}\left(\frac{\alpha_m^{(t)}}{r_m^{(t)}q_m^{(t)}}-1\right)\right),\hfill \\
        &{\rm s.t.} 
         &T_{D}^{(t)} \geq T_{{\sf C},m}^{(t)}, \forall m, \hfill\\
        & & 0< r_m^{(t)} \leq 1, \forall m, \hfill
      \end{matrix}
\end{align}
where $B_t = \frac{(t+\nu)(3\mu\chi-2)}{\mu\chi^2G} \left(L\left(\mathbf{w}^{(t)}\right)-L\left(\mathbf{w}^*\right)-\frac{\mu}{\ell}\epsilon\right)+\sum_{m=1}^M\frac{d_m^2}{d^2}\frac{\sigma^2}{G}$, denoting the current training state of the global model $\mathbf{w}^{(t)}$.  
\subsection{Compression Ratio and Deadline Optimization}
$\mathscr{P}_3$ is a non-convex multivariable problem, we thus consider solving it in an alternating manner, i.e., optimizing one variable by considering the other to be given and fixed in each iteration.
\subsubsection{Compression Ratio Optimization}\label{subsec: sparse coefficient optimization}
In this subsection, we consider optimizing the compression ratio, that is, the sparse coefficient $r_m^{(t)}$ with a fixed $T_{D}^{(t)}$, 
the subproblem is given below. 
\begin{align}
	\mathscr{P}_4: \min_{r_m^{(t)}}~&
	\frac{1}{r_m^{(t)}q_m^{(t)}}\label{eq: comprehensive error minimization problem},\\
	{\rm s.t.}~& 0< r_m^{(t)}\leq 1. \notag
\end{align}
\begin{proposition}\label{proposition: optimal solution of sparse coefficient}
	\emph{Given $T_D^{(t)}$, for FEEL system with stochastic compression and transmission outage, the optimal sparse coefficient for error minimization is given by} 
	\begin{align}\label{eq: optimal solution of sparse ratio}
		r_m^{(t)*}=\min\left\{\frac{B(T_{D}^{(t)} -\frac{\kappa}{f_m})}{bS}h^{-1}\left(\frac{P_m\sigma_m^2}{BN_0\ln 2}\right),1\right\},
	\end{align}
	\emph{where $h\left(x\right)=x2^x$, $h^{-1}\left(x\right)$ denotes the inverse function of $h(x)$.}
\end{proposition}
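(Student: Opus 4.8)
The plan is to substitute the closed-form successful transmission probability from Lemma~\ref{lemma: transmission outage probability} into the objective of $\mathscr{P}_4$ and reduce the resulting single-variable minimization to a stationarity condition that can be inverted in closed form. Writing $q_m^{(t)}=\exp[-\gamma(2^{\beta r}-1)]$ with the shorthand $r=r_m^{(t)}$, $\gamma=\frac{BN_0}{P_m\sigma_m^2}$, and $\beta=\frac{bS}{B(T_D^{(t)}-\kappa/f_m)}$, the objective becomes $F(r)=\frac{1}{r}\exp[\gamma(2^{\beta r}-1)]$. Since $F(r)>0$ on the feasible set, I would minimize the equivalent surrogate $\phi(r)=\ln F(r)=-\ln r+\gamma(2^{\beta r}-1)$, which is smooth and turns the product/exponential structure into a sum that is easy to differentiate.

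Next I would impose the first-order condition $\phi'(r)=0$. Differentiating gives $\phi'(r)=-\frac{1}{r}+\gamma\beta(\ln 2)\,2^{\beta r}$, and multiplying the stationarity equation by $r$ yields $\gamma\beta(\ln 2)\,r\,2^{\beta r}=1$. The key step is the substitution $x=\beta r$, which collapses this to $\gamma(\ln 2)\,x\,2^{x}=1$, i.e. $h(x)=x2^{x}=\frac{1}{\gamma\ln 2}=\frac{P_m\sigma_m^2}{BN_0\ln 2}$. Inverting via $h^{-1}$ and back-substituting $r=x/\beta$ recovers exactly the unconstrained stationary point $\frac{B(T_D^{(t)}-\kappa/f_m)}{bS}\,h^{-1}\!\big(\frac{P_m\sigma_m^2}{BN_0\ln 2}\big)$ claimed in the statement.

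To confirm this stationary point is the global minimizer, I would check $\phi''(r)=\frac{1}{r^2}+\gamma\beta^2(\ln 2)^2 2^{\beta r}>0$ for all $r>0$, so $\phi$ (hence $F$) is strictly convex and the stationary point is the unique unconstrained minimizer; moreover $\phi(r)\to+\infty$ as $r\to 0^+$, so the optimum never sits at the lower end of the feasible interval. Finally I would fold in the box constraint $r\le 1$: by strict convexity, if the unconstrained minimizer exceeds $1$ then $\phi$ is decreasing on $(0,1]$ and the constrained optimum is $r=1$, otherwise it is the interior stationary point, which is precisely the $\min\{\cdot,1\}$ truncation in \eqref{eq: optimal solution of sparse ratio}.

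The only genuinely nontrivial step is recognizing the $x2^x$ (Lambert-$W$-type) structure after the substitution $x=\beta r$; once that is seen, the inversion through $h^{-1}$ and the convexity verification are routine, so I expect the algebraic identification of the stationarity equation with $h(x)=\text{const}$ to be the main obstacle rather than any analytic difficulty.
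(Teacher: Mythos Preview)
Your proposal is correct and follows essentially the same route as the paper: substitute the closed-form $q_m^{(t)}$, pass to the log of the objective, note convexity, and solve the first-order condition. You are simply more explicit than the paper about the substitution $x=\beta r$ that exposes the $h(x)=x2^x$ structure, the second-derivative convexity check, and the $\min\{\cdot,1\}$ clipping from the box constraint---all of which the paper leaves implicit in its one-line ``let $g'(r_m^{(t)})=0$, the desired results can be obtained.''
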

\begin{proof}
	See Appendix D.
\end{proof}
\begin{remark}
	\emph{ (Trade-off between compression error and transmission outage). As shown in $\mathscr{P}_4$, there is a trade-off between compression error and transmission outage when setting $r_m^{(t)}$. A larger $r_m^{(t)}$ naturally leads to a smaller compression error, while due to the increase of uploaded bits, the successful transmission probability $q_m^{(t)}$ decreases. 
	Therefore, the mixed term (i.e., $\frac{1}{r_m^{(t)}q_m^{(t)}}$) rather than individual compression error or transmission outage should be minimized when considering the $r_m^{(t)}$ setting.
	As observed in Proposition 2, $r_m^{(t)*}$ is governed by two terms, i.e., the one related to specific transmission task $\big(\frac{T_D^{(t)}-\frac{\kappa}{f_m}}{b}\big)$, the one related to transmission rate $\big(\frac{P_m\sigma_m^2}{BN_0}\big)$. 
 $r_m^{(t)*}$ monotonically increases with both terms.
 This indicates that when the transmission conditions is good (i.e., enough deadline or larger transmission rate), the optimal sparser will reserve more elements so as to reduce compression error. On the other hand, when conditions become worse, the sparser will turns its priority to suppress transmission outage by dropping more elements.}
\end{remark}
\subsubsection{Deadline Optimization}
\begin{figure}[h]
    \label{alg:LSB}
    \begin{algorithm}[H]
      \caption{The proposed JCDO algorithm}\label{algo:sampling with relpacement}
      \begin{algorithmic}[1]
        \vspace{3mm}
          \renewcommand{\algorithmicrequire}{ \textbf{Initialize}}
          \REQUIRE the global model $\mathbf{w}^{(1)}$ by the server. 
          \FOR {$t=1, 2, ..., N$}
          \STATE Initialize the received update set $\mathcal{K}^{(t)}=\emptyset$. 
          \STATE Calculate $B_t$, $\left\{P_m\right\}_{m=1}^M$, $\left\{\sigma_m^2\right\}_{m=1}^M$, $\{\alpha_m^{(t)}\}_{m=1}^M$, $G$ based on the $\mathbf{w}^{(t)}$ and the current channel state.
          \STATE $\{r_m^{(t)}\}_{m=1}^M$, $T_{D}^{(t)} =$ TransmissionPlan$(B_t, \left\{P_m\right\}_{m=1}^M,$\\$\qquad\qquad\qquad\qquad\qquad~~~~~~~\left\{\sigma_m^2\right\}_{m=1}^M,\{\alpha_m^{(t)}\}_{m=1}^M, G)$.
          \STATE BS broadcasts $\mathbf{w}^{(t)}$ to devices, and starts the countdown. // {\texttt{Model Broadcasting}}
          \FOR {$m=1,2,...,M$ {\bf in parallel}}
          \STATE Set the local model as $\mathbf{w}^{(t)}$.
          \STATE Obtain the local gradient $\mathbf{g}_m^{(t)}$ through the SGD algorithm.~// {\texttt{Local Update}}
          \STATE Send the compressed verion ${\rm Comp}\big(\mathbf{g}_m^{(t)}, r_m^{(t)}\big)$ to the server.~// {\texttt{Update Uploading}}
          \IF{$T_{{\sf C},m}^{(t)}+T_{{\sf U},m}^{(t)}\leq T_D^{(t)}$}
          \STATE $\mathcal{K}^{(t)} = \mathcal{K}^{(t)} \cup \left\{m\right\}$. 
          \ENDIF
          \ENDFOR
          \STATE $\mathbf{w}^{(t+1)}=\mathbf{w}^{(t)}-\eta^{(t)}\sum_{m \in \mathcal{K}^{(t)}} \frac{d_m}{dq_m^{(t)}}{\rm Comp}\big(\mathbf{g}_m^{(t)},r_m^{(t)}\big)$. // {\texttt{Aggregation}}
          \ENDFOR
      \end{algorithmic}
      \begin{algorithmic}[1]
      \renewcommand{\algorithmicrequire}{ \textbf{Function}}
          \REQUIRE TransmissionPlan$\big(B_t, \left\{P_m\right\}_{m=1}^M,\left\{\sigma_m^2\right\}_{m=1}^M,$\\\qquad\qquad\qquad\qquad\qquad\qquad\qquad\qquad\qquad$\{\alpha_m^{(t)}\}_{m=1}^M, G\big)$
          \STATE {\bf Initialize} $T_{D}^{(t)}=T_{D}^{(t-1)},T_{t,{\rm before}}^D=0$.
          \WHILE{$|T_{D}^{(t)} - T_{t,{\rm before}}^D|\geq \alpha$}
          \STATE $T_{t,{\rm before}}^D=T_{D}^{(t)}$.
          \FOR {$m=1,2,...,M$ {\bf in parallel}}
          \STATE Updates $r_m^{(t)}$ using (\ref{eq: optimal solution of sparse ratio}) and $T_D^{(t)}$.
          \ENDFOR
          \STATE Updates $T_{D}^{(t)}$ by solving $\mathscr{P}_5$ with $\{r_m^{(t)}\}_{m=1}^M$.
          \ENDWHILE 
          \STATE {\bf return} $\{r_m^{(t)}\}_{m=1}^M$, $T_D^{(t)}$.
      \end{algorithmic}
    \end{algorithm}
  \end{figure}
In this subsection, we consider optimizing the deadline $T_{D}^{(t)}$ with a fixed $r_m^{(t)}$, similar to Section \ref{subsec: sparse coefficient optimization}, the subproblem is given as follows. 
\begin{align}
	\mathscr{P}_5:\min_{T_{D}^{(t)}}~~&B_t T_{D}^{(t)}\notag\\~~~+ T_{D}^{(t)} &\sum_{m=1}^M \frac{d_m^2}{d^2} \frac{\alpha_m^{(t)}}{r_m^{(t)}} {\rm exp}\Bigg[{{\frac{BN_0}{P\sigma_m^2}}\bigg({2^{\frac{C_m^{(t)}}{T_{D}^{(t)}-T_{{\sf C},m}^{(t)}}}-1}\bigg)}\Bigg], \\
	{\rm s.t.}~~&T_{D}^{(t)} \geq  T_{{\sf C},m}^{(t)}.\notag  
\end{align}
where $C_m^{(t)}=\frac{bSr_m^{(t)}}{B}$.  
\begin{lemma}\label{convex lemma}
	\emph{$\mathscr{P}_5$ is convex.}
\end{lemma}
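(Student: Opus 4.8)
The plan is to verify convexity in the single scalar variable $T_D^{(t)}$ directly through the sign of the second derivative. First I would discard everything manifestly harmless: the term $B_tT_D^{(t)}$ is affine, hence convex irrespective of the sign of $B_t$, and every remaining summand carries the strictly positive weight $\frac{d_m^2}{d^2}\frac{\alpha_m^{(t)}}{r_m^{(t)}}$. Since a nonnegative combination of convex functions is convex, it suffices to show that for each fixed $m$ the scalar map
\begin{align}
    \phi_m(T):=T\,\exp\!\Big[\beta_m\big(2^{C_m^{(t)}/(T-T_{{\sf C},m}^{(t)})}-1\big)\Big],\qquad \beta_m:=\tfrac{BN_0}{P\sigma_m^2}>0,\notag
\end{align}
is convex on the feasible ray $T>T_{{\sf C},m}^{(t)}$.

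To organize the differentiation I would change variables to $y:=T-T_{{\sf C},m}^{(t)}>0$ and $x:=C_m^{(t)}/y>0$, and write $\phi_m=(y+T_{{\sf C},m}^{(t)})\,g_m$ with $g_m=e^{-\beta_m}e^{w}$, $w:=\beta_m2^{x}$. Because $g_m'=g_mw'$, a short computation gives $\phi_m''=g_m\big[\,2w'+T(w')^2+Tw''\,\big]$, so with $g_m>0$ everything reduces to the sign of the bracket. I would then record $w'=-\beta_m(\ln2)C_m^{(t)}\,2^{x}/y^{2}<0$ and $w''=\beta_m(\ln2)C_m^{(t)}\,2^{x}\big(\tfrac{(\ln2)C_m^{(t)}}{y^{4}}+\tfrac{2}{y^{3}}\big)>0$.

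The decisive step is the grouping. Factoring out the positive quantity $\beta_m(\ln2)C_m^{(t)}2^{x}/y^{2}$ from the bracket leaves
\begin{align}
    -2+\frac{2T}{y}+\frac{T(\ln2)C_m^{(t)}}{y^{2}}\big(1+\beta_m2^{x}\big).\notag
\end{align}
Here the last summand is strictly positive, and writing $\frac{2T}{y}=2+\frac{2T_{{\sf C},m}^{(t)}}{y}$ shows that the troublesome $-2$ is exactly cancelled, leaving $\frac{2T_{{\sf C},m}^{(t)}}{y}\ge0$. Hence the bracket is nonnegative, $\phi_m''\ge0$, and $\mathscr{P}_5$ is convex.

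The only genuine obstacle is this bookkeeping. The product $T\,g_m$ is not handled by the standard composition rules: although $g_m$ is itself convex (as $\exp$, convex increasing, composed with $\beta_m2^{C_m^{(t)}/(T-T_{{\sf C},m}^{(t)})}$, which is convex because $1/(T-T_{{\sf C},m}^{(t)})$ is convex and $2^{(\cdot)}$ is convex increasing), a product of convex functions need not be convex. The negative cross-term $2w'$ is precisely what must be controlled, and the argument hinges on recognizing that $\frac{2T}{y}\ge2$ neutralizes it exactly. I would only need to double-check at the end that all constants, namely $\beta_m$, $C_m^{(t)}=\tfrac{bSr_m^{(t)}}{B}$, and $T_{{\sf C},m}^{(t)}=\kappa/f_m$, are nonnegative, which holds by construction.
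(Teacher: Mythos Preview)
Your proposal is correct and follows essentially the same route as the paper: both reduce $\mathscr{P}_5$ to the convexity of each summand $T\,\exp\!\big[\beta\big(2^{c/(T-b)}-1\big)\big]$, compute its second derivative in $T$, and verify the resulting expression is nonnegative on $T>b$. Your write-up is in fact more transparent than the paper's, since you isolate the one nontrivial cancellation $-2+\tfrac{2T}{y}=\tfrac{2T_{{\sf C},m}^{(t)}}{y}\ge 0$ explicitly, whereas the paper simply displays the closed-form $h''(x)$ and asserts its sign.
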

\begin{proof}
	See Appendix E.
\end{proof}

With Lemma \ref{convex lemma}, the optimal $T_{D}^{(t)}$ could be obtained efficiently through bisection algorithm. 
\begin{remark}
	\emph{(Adaptive adjustment of deadline).
	The edge server needs to set the deadline according to the current global model and the channel state. 
	Intuitively, reducing the deadline can reduce the training latency of each round. 
	However, for the device side, it will increase the compression error w.r.t. the original local update, as devices have to increase the data compression degree to avoid to increase the outage probability. 
	According to Remark \ref{remark: convergence analysis}, a larger compression error requires more rounds, thus a smaller deadline does not necessarily lead to less training time. 
	On the other hand, the optimal deadline also depends on the current training state through the related term $B_t$. 
	$B_t$ changes over communication rounds, suggesting the continuous dynamic adjustment of the receiving window, which is more conducive to reducing the remaining training time. 
	}
\end{remark}

The overall algorithm is presented in Algorithm \ref{algo:sampling with relpacement}\footnote{Note that, at round $t$, we estimate $\{\alpha_m^{(t)}\}_{m=1}^M$, $G$ by $G\approx \max_{n=1,...,t,m=1,...,M}\|\mathbf{g}_m^{(n)}\|_2^2$, and $\alpha_m^{(t)}\approx\max_{n=1,...,t}\frac{\|\mathbf{g}_m^{(n)}\|_1^2}{S\|\mathbf{g}_m^{(n)}\|_2^2}$.}. 
\begin{table*}[h]
    \scriptsize
    \caption{Datasets and DNNs adopted in this paper.}\label{table: dataset and model}
    \vspace{0.1in}
    \resizebox{1\textwidth}{18mm}
    {
    \centering
    \begin{tabular}{|c|c|c|c|c|c|c|c|c|}
    \hline
    \hline
    \textbf{Dataset} & \textbf{Task} & \multicolumn{2}{c|}{\textbf{Characterization}} & \textbf{Training Set} & \textbf{Testing Set} & \textbf{Data Format} & \textbf{Label Format}
    \\ \hline
     \multirow{2}{*}{\begin{tabular}[|c|]{@{}c@{}} FEMNIST \cite{caldas2018leaf} \end{tabular}}
     &
     \multirow{2}{*}{\begin{tabular}[|c|]{@{}c@{}} Image\\classification \end{tabular}}
      &
      \multicolumn{2}{c|}{
        \multirow{2}{*}{\begin{tabular}[c|]{@{}c@{}} handdwritten images, consists of 10 numbers,\\ 26 lower case letters, 26 upper case letters  \end{tabular}}
        }
    &	\multirow{2}{*}{\begin{tabular}[|c|]{@{}c@{}} $220313$ images,\\ $100$ users \end{tabular}}
       &
            \multirow{2}{*}{\begin{tabular}[|c|]{@{}c@{}} $25025$ images, $404$ for \\each class in average\end{tabular}}
       &
     \multirow{2}{*}{\begin{tabular}[|c|]{@{}c@{}} $28\times28$ grayscale image\end{tabular}}
          &
     \multirow{2}{*}{\begin{tabular}[c]{@{}c@{}} Class index: $0$-$61$   \end{tabular}}
    \\
    & & \multicolumn{2}{c|}{} & & & &
    \\
    \hline
     \multirow{2}{*}{\begin{tabular}[|c|]{@{}c@{}} CIFAR10 \cite{krizhevsky2009learning} \end{tabular}}
     &
     \multirow{2}{*}{\begin{tabular}[|c|]{@{}c@{}} Image\\classification \end{tabular}}
      &
      \multicolumn{2}{c|}{
        \multirow{2}{*}{\begin{tabular}[c|]{@{}c@{}}  a widely-adopted tiny image dataset, \\contains some common classes \end{tabular}}
        }
       &
            \multirow{2}{*}{\begin{tabular}[|c|]{@{}c@{}} $50000$ images, \\each class has $5000$\end{tabular}}
       &
            \multirow{2}{*}{\begin{tabular}[|c|]{@{}c@{}} $10000$ images, \\each class has $1000$ \end{tabular}}
       &
     \multirow{2}{*}{\begin{tabular}[|c|]{@{}c@{}} $32\times32$ colour image \end{tabular}}
          &
     \multirow{2}{*}{\begin{tabular}[c]{@{}c@{}} Class index: $0$-$9$  \end{tabular}}
    \\
    & & \multicolumn{2}{c|}{} & & & &
    \\
    \hline
    \multirow{2}{*}{\begin{tabular}[|c|]{@{}c@{}} KITTI \cite{Geiger2012CVPR} \end{tabular}}
    &
    \multirow{2}{*}{\begin{tabular}[|c|]{@{}c@{}} Object\\detection \end{tabular}}
     &
     \multicolumn{2}{c|}{
       \multirow{2}{*}{\begin{tabular}[c|]{@{}c@{}} 2D image dataset for vehicle \\detection in automatic driving \end{tabular}}
       }
      &
           \multirow{2}{*}{\begin{tabular}[|c|]{@{}c@{}} $7000$ images\\$38094$ labeled objects\end{tabular}}
      &
           \multirow{2}{*}{\begin{tabular}[|c|]{@{}c@{}} $480$ images\\$2476$ labeled objects \end{tabular}}
      &
    \multirow{2}{*}{\begin{tabular}[|c|]{@{}c@{}} $1224\times 370$ colour image \end{tabular}}
         &
    \multirow{2}{*}{\begin{tabular}[c]{@{}c@{}} Object list in KITTI format\\$5$ elements for each object \end{tabular}}
   \\
   & & \multicolumn{2}{c|}{} & & & &
   \\
   \hline
    \textbf{DNN} & \textbf{Task} &
    \multicolumn{3}{c|}{\textbf{Architecture }}
    & \textbf{Loss Function} & \textbf{Learning Rate} & \textbf{Performance Metric}
    \\
    \hline
     \multirow{2}{*}{\begin{tabular}[|c|]{@{}c@{}} Logistic \end{tabular}}
     &
     \multirow{2}{*}{\begin{tabular}[|c|]{@{}c@{}} Image\\classification \end{tabular}}
       &\multicolumn{3}{c|}{
       \multirow{2}{*}{\begin{tabular}[c|]{@{}c@{}} $1$ fully-connected layer, $1$ soft-max output layer,\\ $48670$ trainable parameters \end{tabular}}
       }
       &
            \multirow{2}{*}{\begin{tabular}[|c|]{@{}c@{}} Cross Entropy \end{tabular}}
       &
     \multirow{2}{*}{\begin{tabular}[|c|]{@{}c@{}} $\chi=30,\nu=100$, SGD \end{tabular}}
          &
     \multirow{2}{*}{\begin{tabular}[c]{@{}c@{}} Classification accuracy \end{tabular}}
    \\
    & & \multicolumn{3}{c|}{} & & &
    \\
    \cline{1-8}
     \multirow{2}{*}{\begin{tabular}[|c|]{@{}c@{}} VGG \cite{simonyan2014very} \end{tabular}}
     &
     \multirow{2}{*}{\begin{tabular}[|c|]{@{}c@{}} Image\\classification \end{tabular}}
       &\multicolumn{3}{c|}{
       \multirow{2}{*}{\begin{tabular}[c|]{@{}c@{}} $8$ VGG blocks, $5$ pooling layers, classifier with dropout, \\ $9756426$ trainable parameters \end{tabular}}
       }
       &
       \multirow{2}{*}{\begin{tabular}[|c|]{@{}c@{}} Cross Entropy \end{tabular}}
       &
     \multirow{2}{*}{\begin{tabular}[|c|]{@{}c@{}} $\chi=10,\nu=100$, SGD  \end{tabular}}
          &
     \multirow{2}{*}{\begin{tabular}[c]{@{}c@{}} Classification accuracy \end{tabular}}
    \\
    & & \multicolumn{3}{c|}{} & & &
    \\
    \hline
    \cline{1-8}
    \multirow{2}{*}{\begin{tabular}[|c|]{@{}c@{}} YOLOv5  \end{tabular}}
    &
    \multirow{2}{*}{\begin{tabular}[|c|]{@{}c@{}} Object\\detection \end{tabular}}
      &\multicolumn{3}{c|}{
      \multirow{2}{*}{\begin{tabular}[c|]{@{}c@{}} Backbone: CSPDarknet, Neck: PANet, Head: YOLO layer\\ $7041205$ trainable parameters \end{tabular}}
      }
      &
      \multirow{2}{*}{\begin{tabular}[|c|]{@{}c@{}} BCEWithLogitsLoss, CIoU loss \end{tabular}}
      &
    \multirow{2}{*}{\begin{tabular}[|c|]{@{}c@{}} $\chi=3,\nu=100$, Adam  \end{tabular}}
         &
    \multirow{2}{*}{\begin{tabular}[c]{@{}c@{}} mean Average precision (mAP) \\at $\rm{IoU}=0.5$ \end{tabular}}
   \\
   & & \multicolumn{3}{c|}{} & & &
   \\
   \hline
   
    \end{tabular}
    }
    \label{Table.related_work}
    \end{table*}
\begin{remark}\label{remark: time complexity analysis}
	\emph{(Computational complexity analysis). 
    The execution of the proposed JCDO algorithm in each round includes 
    the alternating update of the compression ratios $\{r_{m}^{(t)}\}_{m=1}^M$ and the deadline $T_D^{(t)}$. 
    For updating $\{r_{m}^{(t)}\}_{m=1}^M$, 
    as the semi closed-form solution is given in (\ref{eq: optimal solution of sparse ratio}), the complexity for this part is $\mathcal{O}(M\log_2(\mathscr{R}_1))$, where $\mathscr{R}_1$ denotes the searching space for obtaining the value of $h^{-1}(x)$;  
    For the deadline optimization, the required complexity is $\mathcal{O}(\log_2(\mathscr{R}_2))$, where $\mathscr{R}_2$ is the search space of $T_D^{(t)}$. 
    Therefore, the overall complexity for transmission plan is $\mathcal{O}(J(M\log_2(\mathscr{R}_1))+\log_2(\mathscr{R}_2))$, where $J$ is the number of iterations. 
    }
\end{remark}
\begin{table}[h]
    \caption{{System settings}}\label{table: communication settings}
    \centering
    \begin{tabular}{|c|c|c|c|}
    \hline
    Parameters & FEMNIST & CIFAR10 & KITTI \\ \hline
    $M$          & $100$     & $10$      & $10$    \\ \hline
    $B$          & 1MHz    &     1MHz    &    1MHz   \\ \hline
    $N_0$          & $-174$ dBm/Hz    &     $-174$ dBm/Hz     &    $-174$ dBm/Hz    \\ \hline
    $P_m$         & $8$dBm   &   $18$dBm      &   $18$dBm    \\ \hline
    $b$          & $16$      & $32$      & $32$    \\ \hline
    $\kappa$        &   $5\times10^4$      &   $5\times 10^6$      &   $3\times 10^{8}$    \\ \hline
    $f_m$       &    $\mathcal{U}[0.1,1]$ GHz      &    $\mathcal{U}[0.1,1]$ GHz     &   $\mathcal{U}[0.1,1]$ GHz    \\ \hline
    \end{tabular}
    \end{table}
\section{Experimental Results}
In this section, we aim to demonstrate the effectiveness of our proposed JCDO algorithm. 
To this end, we consider the following benchmark schemes: 
1) FedSGD \cite{mcmahan2017communication}, which uploads the model updates in full precision and waits for the arrive of the slowest device update before starting aggregation process;  
2) FedTOE \cite{{wang2021quantized}}, which aims to adjust the compression ratio to keep the devices having the same outage probability. 
In addition, to verify the effectiveness of compressor and deadline optimization separately, two algorithms extended from JCDO algorithm, namely \emph{compression-optimization} (CO) with a fixed deadline, and \emph{deadline optimization} (DO) with a fixed sparse coefficient are considered as well.

\rv{We consider two federated learning tasks, namely standard task and practical task. 
For standard tasks, we consider image classification on FEMNIST dataset and CIFAR10 dataset, using Logistic regression model and VGG model, respectively. 
}
\rv{Additionally, we aim to test on real-world advanced FEEL applications, which often involve more complex tasks than classification. Therefore to bridge this gap, we've included object detection tasks in autonomous driving applications.}
Specifically,  we consider training the well known YOLOV5 model on a realistic autonomous driving dataset, KITTI. 
The training details and the characterization of datasets are presented in Table \ref{table: dataset and model}. 
{The number of device, communication settings,  and computation parameters for the aforementioned tasks are concluded in Table \ref{table: communication settings}.} 

\rv{Notably, in practical FEEL scenarios, there are many aspects of heterogeneity among devices. 
Given this, similar to \cite{jiang2022adaptive}, we in this paper consider the heterogeneity issue from three aspects, namely dataset heterogeneity, computation heterogeneity, and communication heterogeneity:  
1) Dataset heterogeneity: We use a label aware approach to partition the CIFAR-10 dataset. 
Specifically, the whole CIFAR10 dataset is first sorted by the labels, then is divided into $40$ shards of size $1250$. 
Each device is assigned with 4 shards. 
The FEMNIST dataset is a benchmark dataset for federated learning, it is generated from multiple users, thus does not need to be processed manually. 
As for the KITTI dataset, we partitioned the dataset in a random manner, that is, each device is randomly assigned $700$ training samples from the whole dataset without replacement. 
2) Computation heterogeneity: as shown in Table \ref{table: communication settings}, the computational capabilities denoted by $f_m$ varies among devices, and follows a uniform distribution ranging from $0.1$GHz to $1$GHz. 
3) Communication heterogeneity: The distance $\omega$ (in km) between the edge server and devices is uniformly distributed between $0.01$ and $0.5$, and the corresponding path loss (i.e., $\sigma_m^2$) defined in Section \ref{subsec: communication model} is $128.1+37.6\log_{10}\left(\omega\right)$ in dB. 
}

All the experiments are implemented by PyTorch 1.8.0 and Python 3.8 on a Linux server with 2 NVIDIA RTX 3090 GPUs.


\subsection{\rv{FEEL for {Standard Tasks}: Classification on FEMNIST and CIFAR-10 Datasets}}
\begin{figure*}[h]
    \centering
    \subfigure[Logistic on FEMNIST]{
        \label{Fig:compression convergence curve:a} 
        \includegraphics[width=0.235\linewidth]{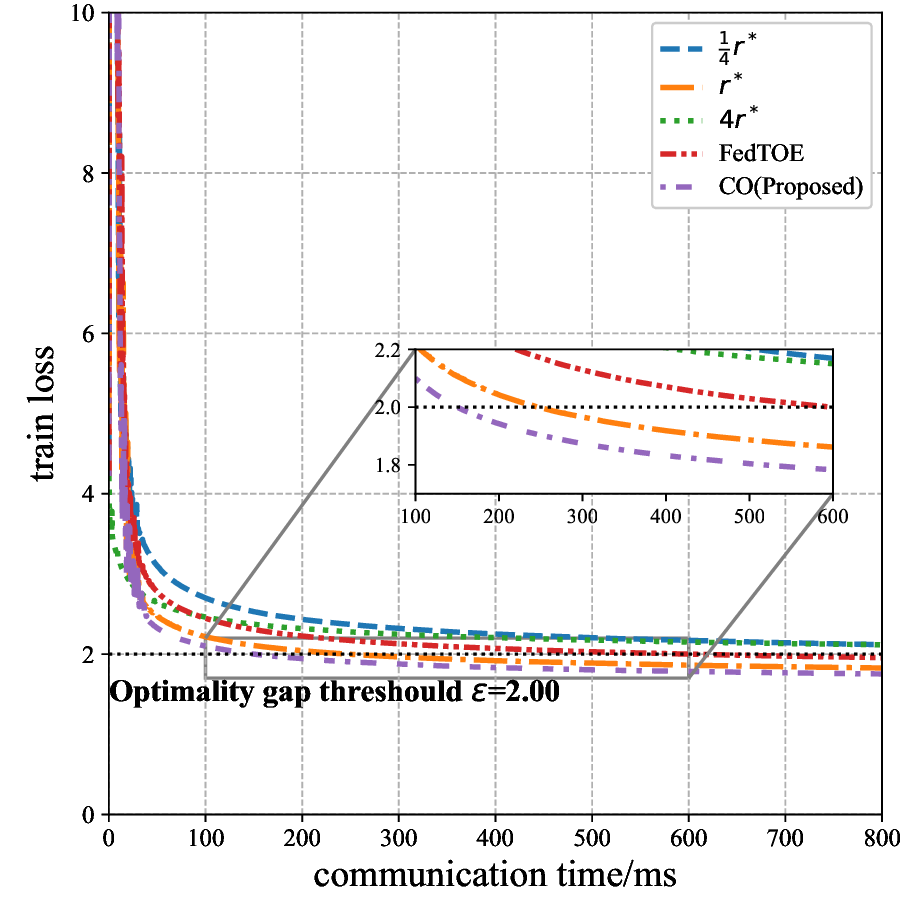}}
    \subfigure[Logistic on FEMNIST]{
        \label{Fig:compression convergence curve:b} 
        \includegraphics[width=0.235\linewidth]{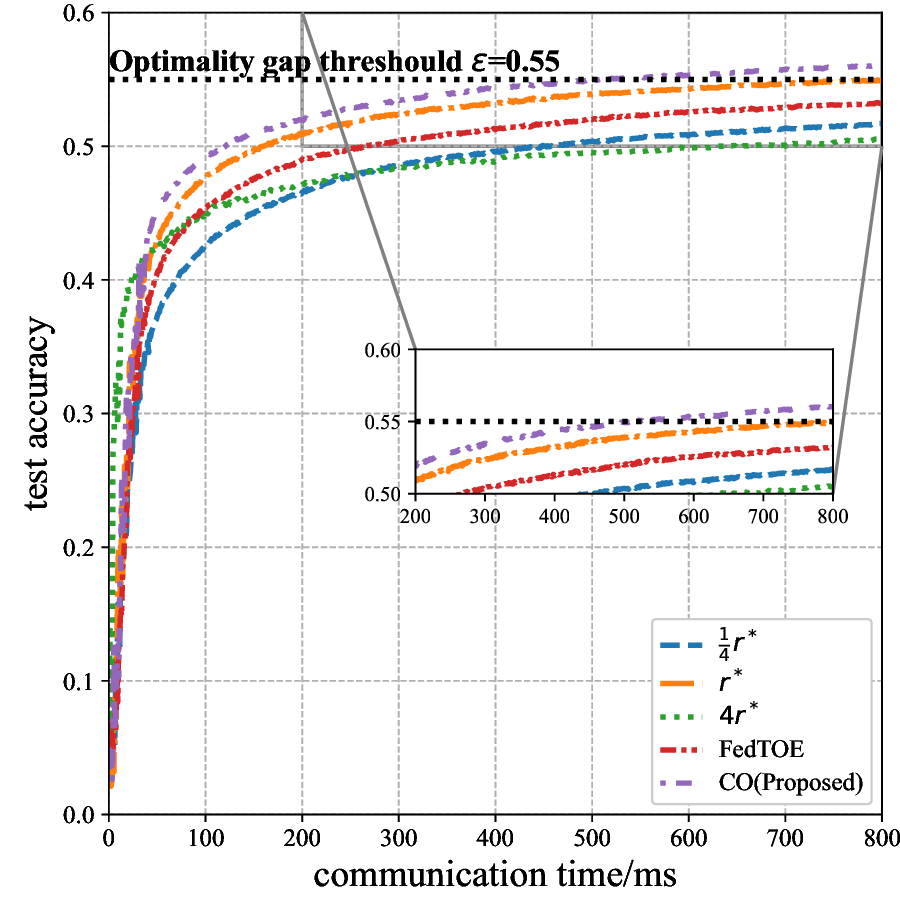}}
        \subfigure[VGG on CIFAR-10]{
            \label{Fig:compression convergence curve:c} 
            \includegraphics[width=0.235\linewidth]{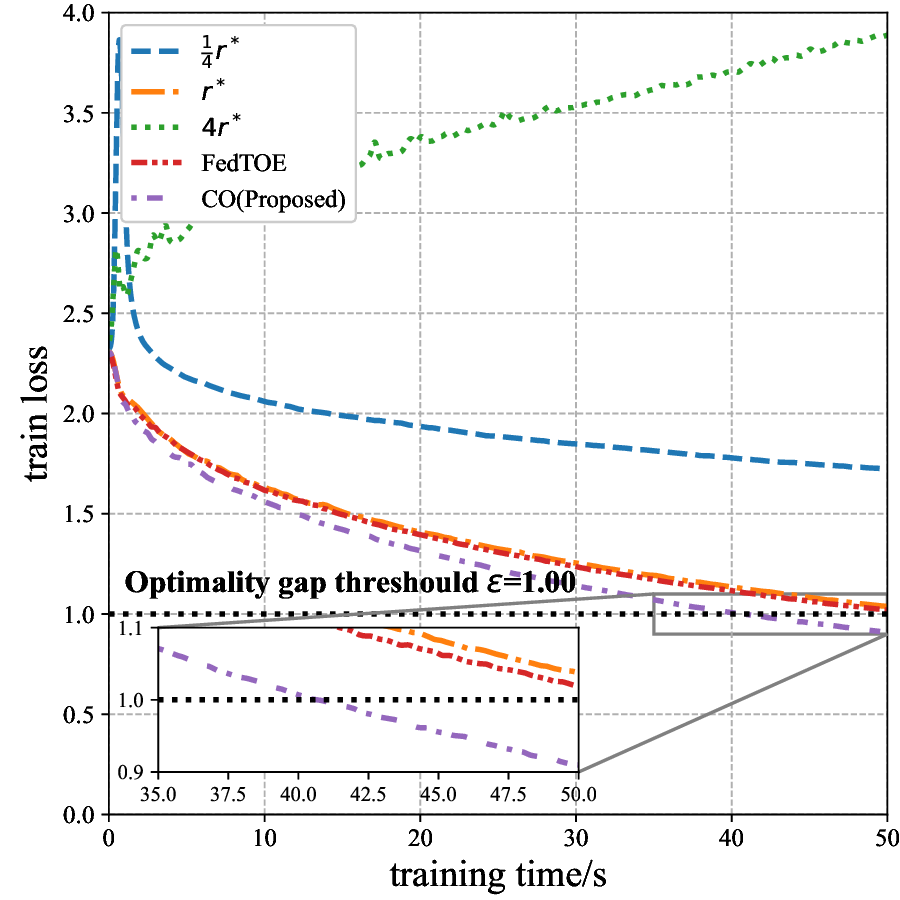}}						    
        \subfigure[VGG on CIFAR-10]{
        \label{Fig:compression convergence curve:d} 
        \includegraphics[width=0.235\linewidth]{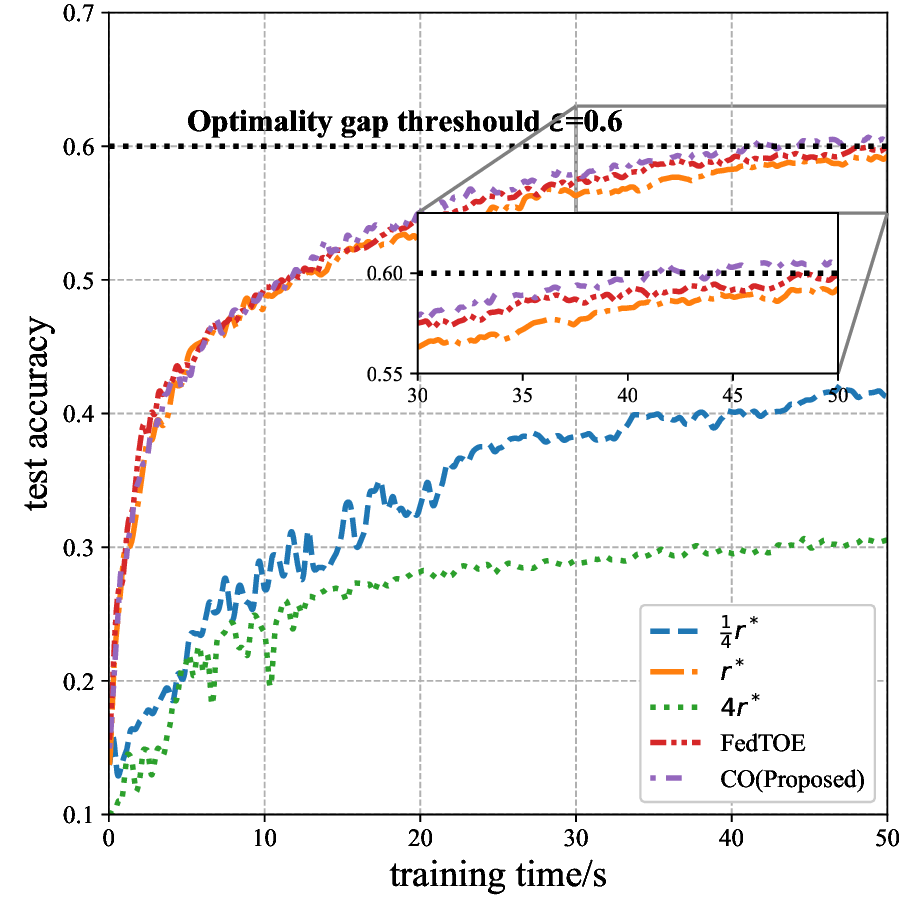}}						
    \caption{Performance comparison of different compression schemes}\label{fig: sparse evalation}
\end{figure*}
\begin{figure*}[h]
	\centering
	\subfigure[\rv{Logistic on FEMNIST}]{
		\label{Fig:convergence curve:a} 
		\includegraphics[width=0.235\linewidth]{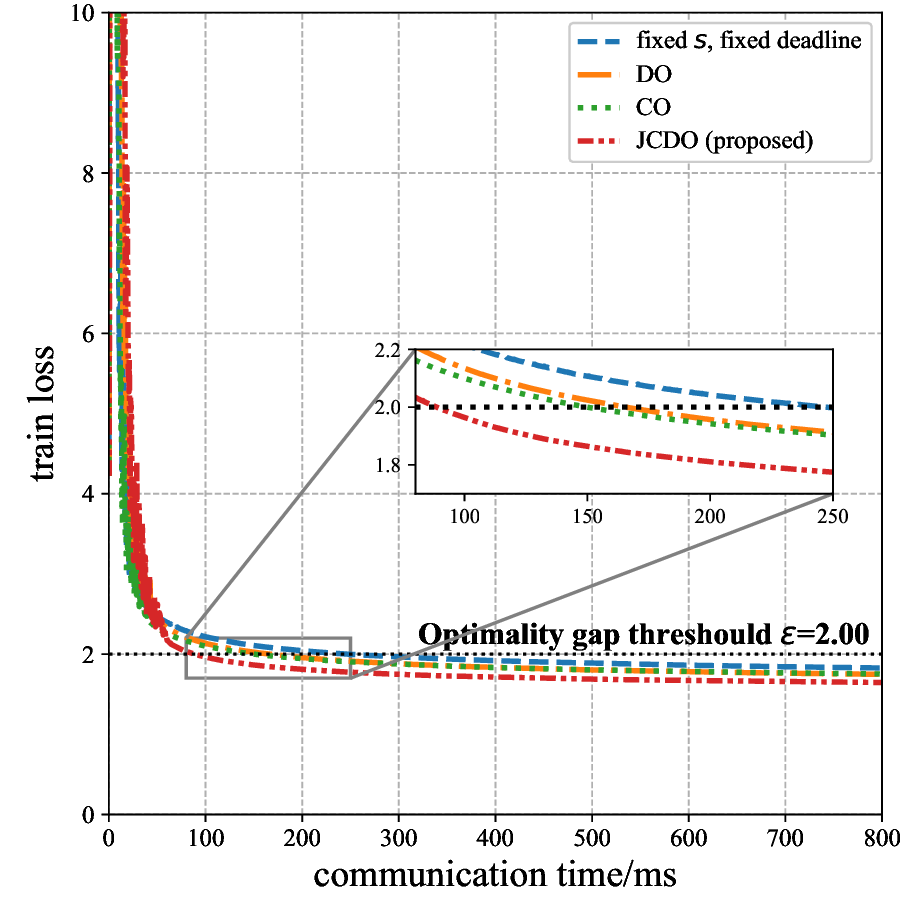}}
    \subfigure[\rv{Logistic on FEMNIST}]{
		\label{Fig:convergence curve:b} 
		\includegraphics[width=0.235\linewidth]{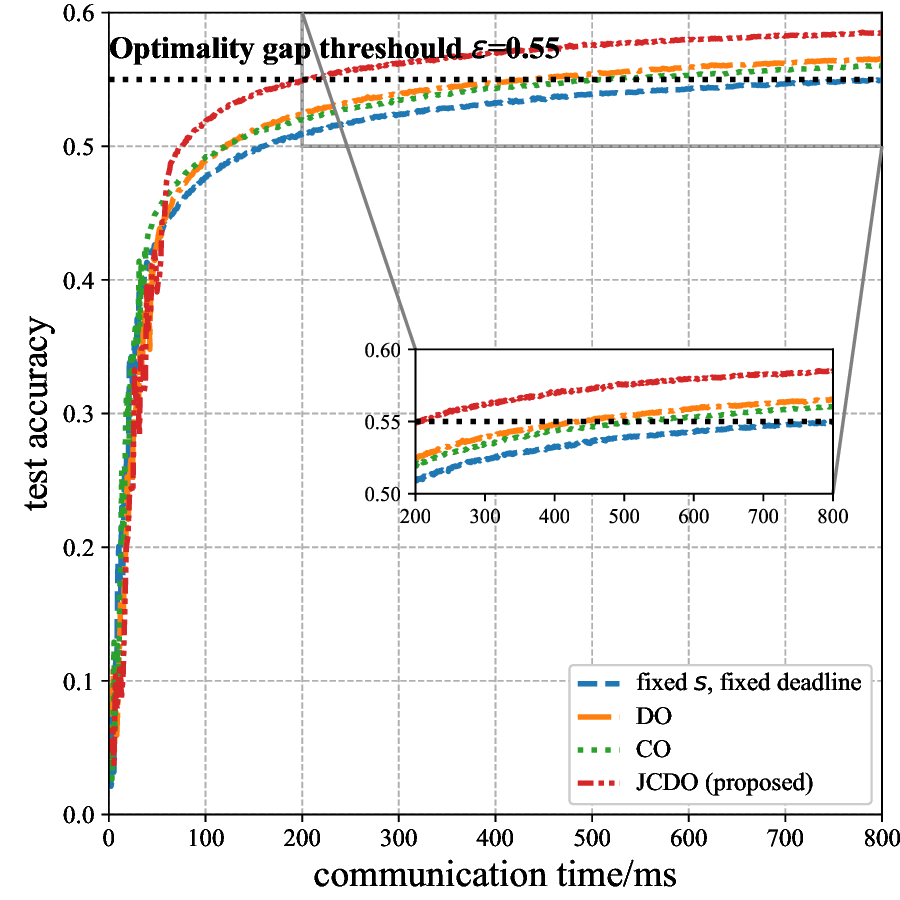}}
        \subfigure[\rv{VGG on CIFAR-10}]{
            \label{Fig:convergence curve:c} 
            \includegraphics[width=0.235\linewidth]{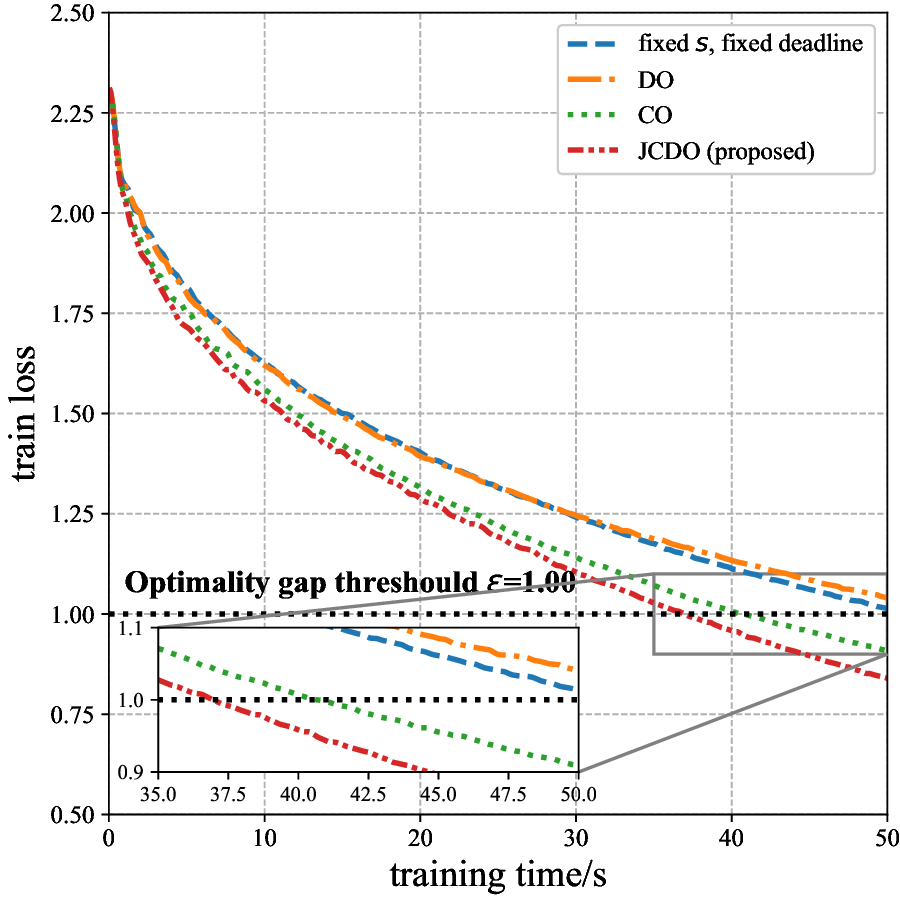}}						    
        \subfigure[\rv{VGG on CIFAR-10}]{
		\label{Fig:convergence curve:d} 
		\includegraphics[width=0.235\linewidth]{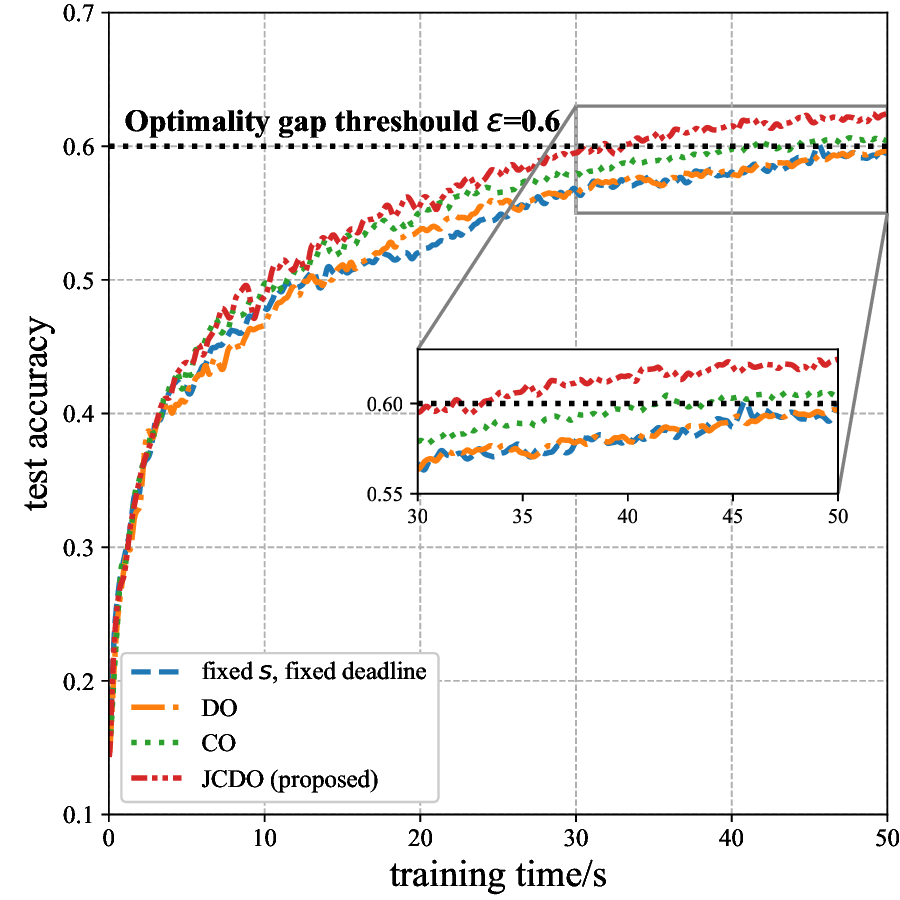}}						
	\caption{\rv{Ablation experiments for evaluating joint optimization algorithms}}\label{Fig:3}
\end{figure*}
\subsubsection{Evaluation of Compression Schemes}\label{subsec: evaluation of compression}
\rv{In this subsection, we explore the impact of various compression ratio $r$ programming schemes, as illustrated in Fig. \ref{fig: sparse evalation}. The deadline here is set to 0.2ms/25ms in the experiment of FEMNIST/CIFAR-10. We evaluate the proposed CO scheme by comparing it with FedTOE\footnote{The parameter $q_m^{(t)}$ in FedTOE is preset to $0.9$.}, and three heuristic schemes that set the same $r$ among devices, namely $r=r^*$, $r=\frac{1}{4}r^*$, $r=4r^*$, where $r^*$ denotes the optimal compression ratio that is adopted across devices and is obtained though numerical experiments.\footnote{$r^*$ foe FEMNIST with Logistic model and CIFAR-10 with VGG model is given by $0.0004$ and $0.0002$.}
  Fig. \ref{Fig:compression convergence curve:a} and Fig. \ref{Fig:compression convergence curve:b} depict the performance metrics of training loss and test accuracy on FEMNIST dataset, respectively. 
  Similarly the results on CIFAR-10 dataset are shown in Fig. \ref{Fig:compression convergence curve:c} and Fig. \ref{Fig:compression convergence curve:d}. 
  Our findings reveal that both higher (i.e., $4r^*$) and lower (i.e., $\frac{1}{4}r^*$) compression ratios impact convergence speed. Specifically, using $4r^*$ maintains high local update precision but makes timely uploading to the BS challenging, thereby increasing the likelihood of transmission outage. On the other hand, employing $\frac{1}{4}r^*$ mitigates the outage issue by necessitating only lightweight updates, but at the cost of introducing substantial compression errors that impair the performance of global model. 
  Consequently, it is crucial to minimize a combined metric that considers both compression error and transmission outage when determining the optimal compression ratio. Moreover, the heuristic $ r $ settings do not adequately address channel heterogeneity across devices. Although FedTOE achieves a relatively good trade-off by adaptively adjusting $ r $ values, it requires pre-setting a common target for transmission outage probability, which may be impractical in real-world FEEL systems.
  In contrast, our proposed CO scheme doesn't require any hyperparameters and optimizes a mixed term to outperform the benchmark schemes in both learning tasks. Specifically, CO accelerates the training process of reaching the goal train loss $\epsilon$ on the FEMNIST dataset by $3.9\times$, $1.6\times$, compared to FedTOE and the best-fixed setting, and similarly outpaces the competition on the CIFAR-10 dataset by $1.2\times$, $1.3\times$. This attests to its superior performance. 
}
\subsubsection{\rv{Evaluation of Joint Optimization Scheme}}
\rv{In this subsection, we evaluate the effectiveness of our proposed joint optimization framework, denoted as JCDO, by contrasting its performance with three benchmark approaches: 1) a fixed compression ratio $ r^* $ combined with an optimal fixed deadline $ T_D^* $, 2) compression ratio optimization with $ T_D^* $ (CO), and 3) deadline optimization with $ r^* $ (DO). Here, $ T_D^* $ is determined through a bisection search method. The comparative results are summarized in Figure \ref{Fig:3}. Specifically, Fig. \ref{Fig:convergence curve:a} and Fig. \ref{Fig:convergence curve:b} showcase the training loss and test accuracy on the FEMNIST dataset using a Logistic Network, respectively. 
Compared to the conventional fixed scheme, JCDO, CO, and DO enhance the training speed to reach a classification accuracy of $ \epsilon=0.55 $ by factors of $ 4.0\times $, $ 1.6\times $, and $ 1.9\times $ respectively. Likewise, our experiments on the CIFAR-10 dataset, as depicted in Fig. \ref{Fig:convergence curve:c} and Fig. \ref{Fig:convergence curve:d}, show that JCDO and CO expedite the training to achieve a classification accuracy of $ \epsilon=0.55 $ by $ 1.4\times $ and $ 1.1\times $, when compared to the fixed scheme. 
Through extensive ablation studies, we validate the effectiveness of both the compression and deadline optimization strategies, as well as their synergistic effects when jointly optimized. Thus, our joint optimization framework, JCDO, provides a compelling solution for improving the training efficiency in FEEL. 
}
					
\subsubsection{{Evaluation on Resources}}
{
    In this subsection, we compare the proposed JCDO algorithm with the benchmark schemes 
    given different bandwidth and number of devices.  
    Under a given training objective (i.e., the test accuracy achieves 0.5), the required communication time is depicted in Fig. \ref{Fig:bandwidth comparison}.  
    It can be found that system bandwidth has a huge impact on the required training time. 
    More specifically, the proposed CO method outperforms FedTOE since CO minimizes the aggregation error introduced by data compression and transmission outage. 
    Moreover,  the DO method that optimizes the per-round deadline outperforms FedTOE and has a comparable performance with CO, which illustrates the importance and necessity of deadline setting. 
    Finally, the proposed JCDO scheme attains the fastest convergence in all the examined bandwidth settings, which validates the effectiveness of the proposed scheme. 
    }

{
    The performance comparison given various number of devices is illustrated in Fig. \ref{fig: number of devices}, where all the schemes have the same training time budget (i.e., 100ms). 
    After training, we evaluate the global model obtained through different schemes. When number of devices is small (i.e., 10), the randomness caused by transmission outage is large and more likely to result in biased aggregation,  which eventually renders performance degradation. 
    As the number of devices increases, the performance of the scheme based on data compression and deadline setting gradually improves.
     Compared to FedTOE, the two baseline algorithms, i.e., CO and DO,  demonstrate some performance gain, and the proposed JCDO algorithm outperforms CO and DO, which further validates the effectiveness and robustness of the proposed algorithm. 
}
\begin{figure}[h]
    \centering
    \includegraphics[width=0.8\linewidth]{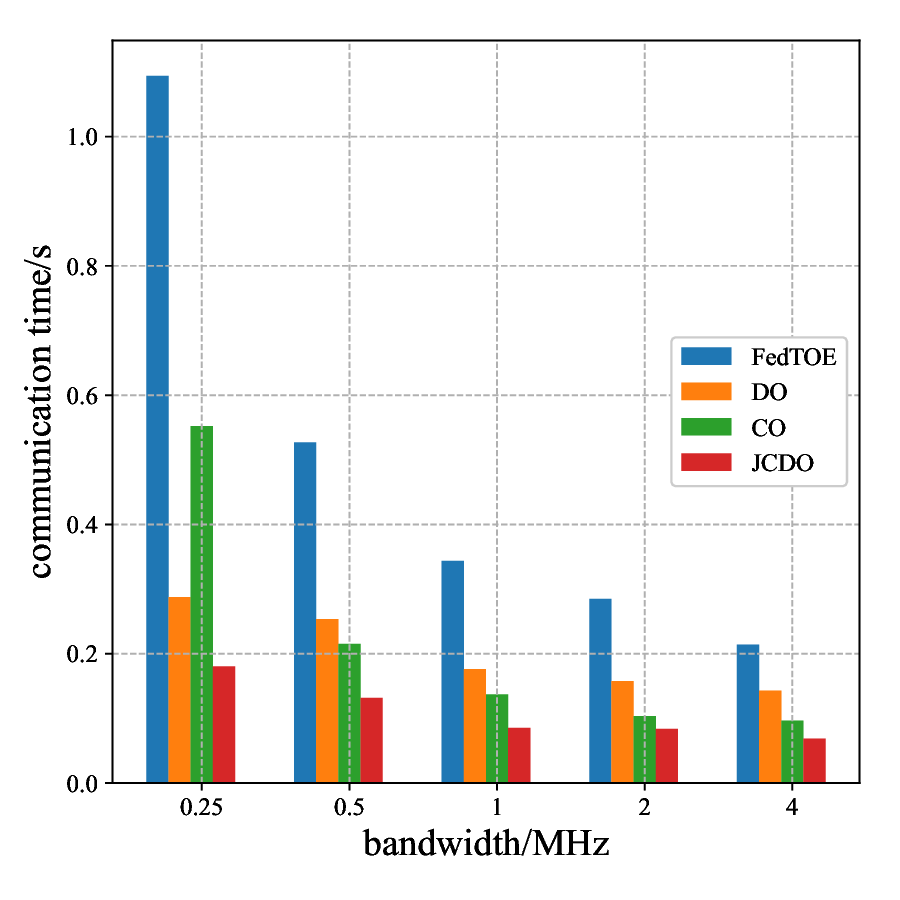}
    \caption{{Performance comparison of different bandwidth settings under a given training time (Logistic on FEMNIST)}}
    \label{Fig:bandwidth comparison}
\end{figure}
\begin{figure}[h]
    \centering
    \includegraphics[width=0.8\linewidth]{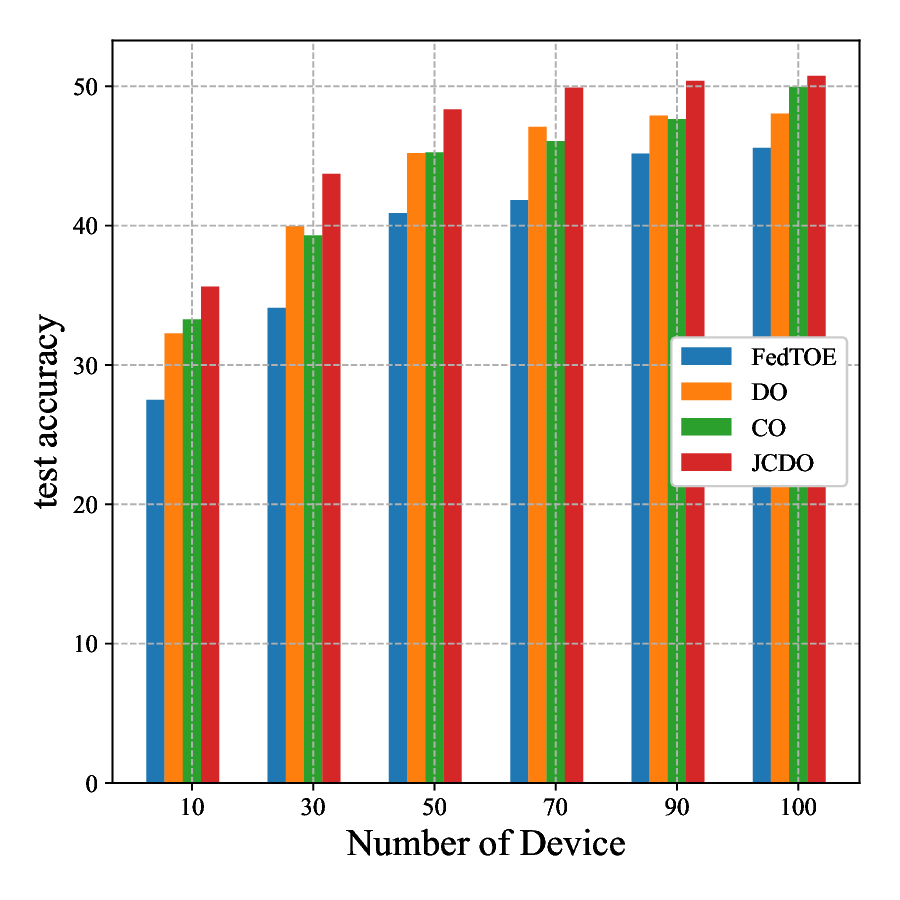}
    \caption{{Performance comparison of different device number under a given training time (Logistic on FEMNIST)}}\label{fig: number of devices}
    \label{Fig:device number comparison}
\end{figure}
\subsection{\rv{FEEL for Advanced Task: Object Detection in Autonomous Driving}}
\begin{figure}[h!]
    \centering
    \includegraphics[width=0.8\linewidth]{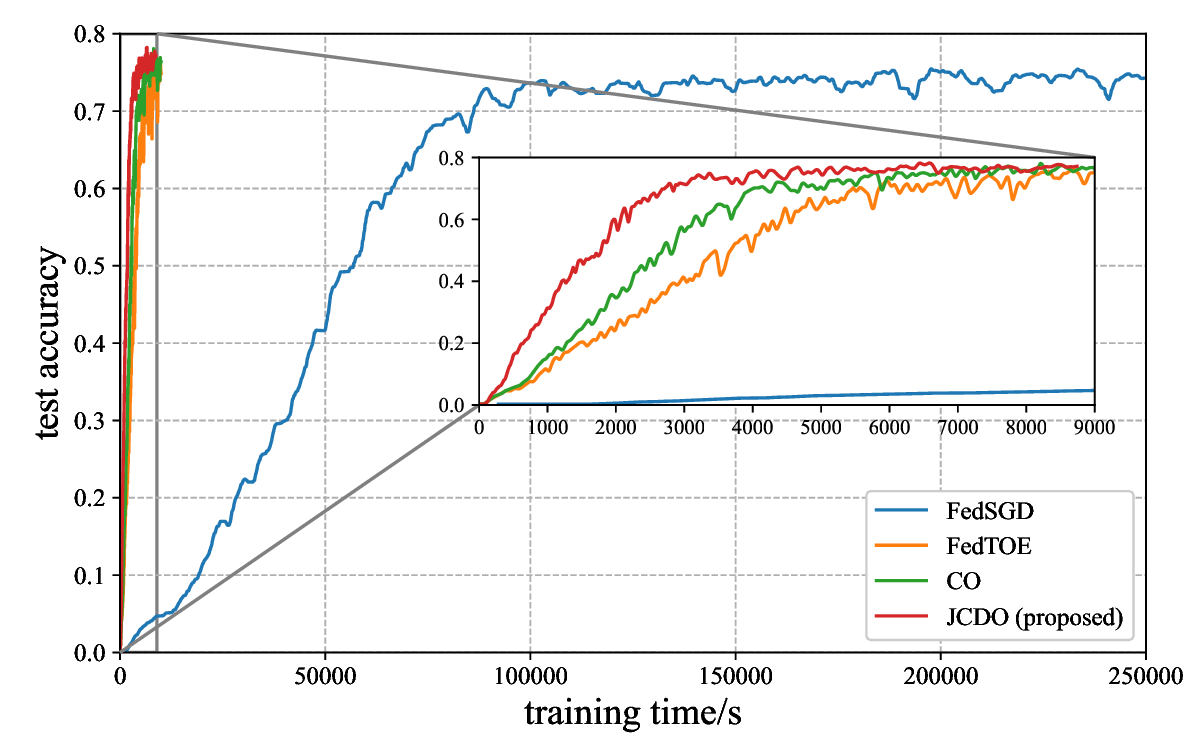}
    \caption{Performance comparison of different compression schemes (Object detection on Kitti dataset)}\label{fig: yolo sparse evalation}
    \label{Fig:AD test acc}
\end{figure}
\begin{figure*}[h]
    \centering
    \includegraphics[width=1\linewidth]{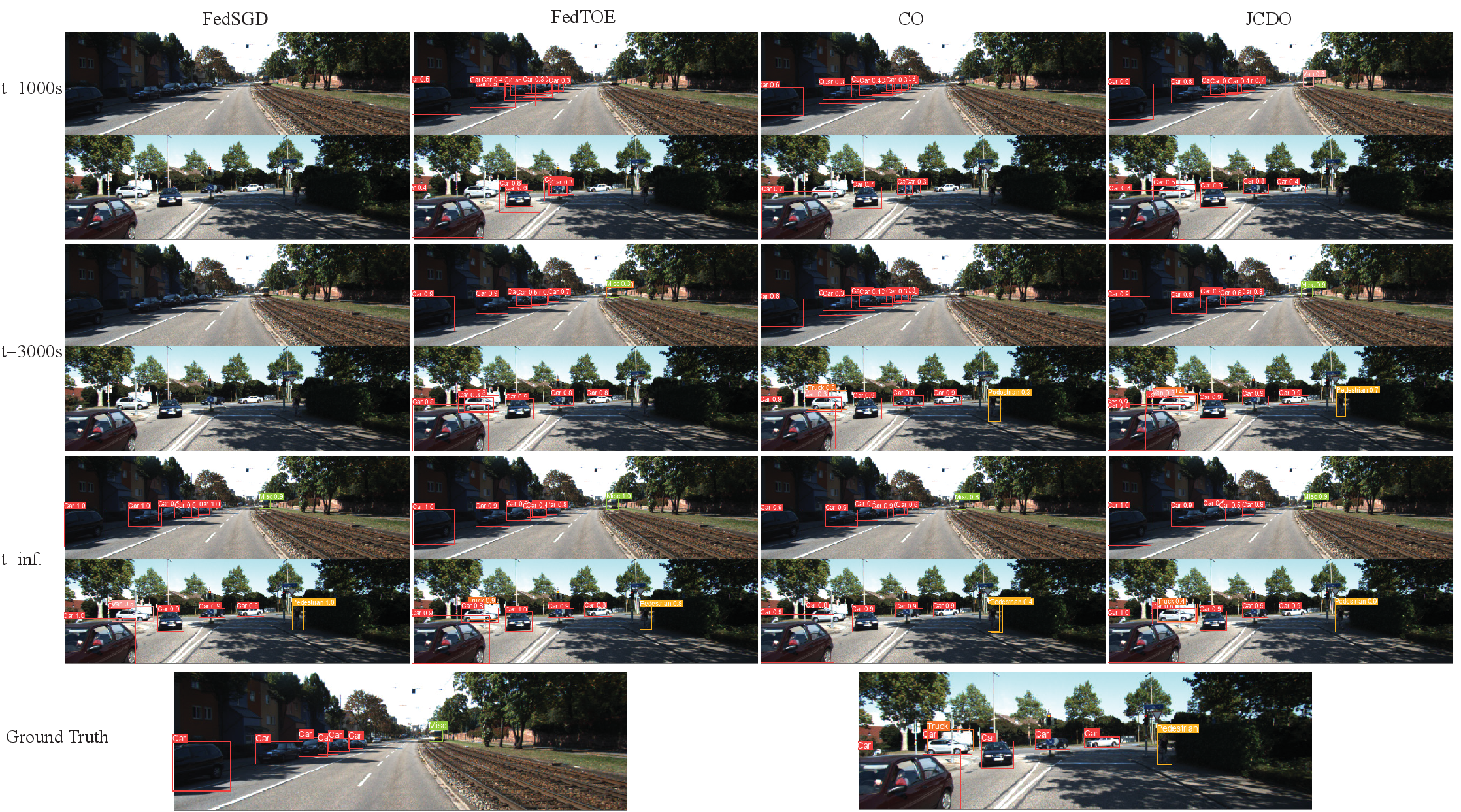}
    \caption{Detection results of different schemes under three training stages}
    \label{Fig:AD detection results}
\end{figure*}
In this subsection, we compare the proposed JCDO scheme with three benchmark schemes in the task of object detection, where the $q_m^{(t)}$ in FedTOE is preset to $0.5$. 
The test performance versus training time is shown in Fig. \ref{Fig:AD test acc}. 
Firstly, it can be seen that the original FedSGD algorithm requires an extremely long time to complete the learning task. 
This is because the FEEL requires a time-consuming data exchange during each update round. 
Devices need to upload $7,041,025$ parameters in YOLOv5 network to the server simultaneously, 
which results in prohibitive communication overhead for bandwidth limited communication system. 
In addition, due to the instability of the wireless environment, some straggler devices further prolong the per-round latency. 
It reminds us that data compression and deadline design are necessary for the practical deployment of FEEL. 
Secondly, the total training time of the other three schemes based on data compression is significantly shortened, all of them can complete the task in $9000$s. 
Finally, the proposed JCDO scheme still outperforms CO and FedTOE, as it gives better guidance for the setting of sparse ratio and deadline based on theoretical analysis.  
It can be found that the proposed JCDO scheme achieves a speed-up of 30 times compared to the vanilla FedSGD algorithm. 
The robustness of the proposed scheme for tackling a variety of learning tasks is verified. 

As shown in Fig. \ref{Fig:AD detection results}, we directly evaluate the schemes from the detection performance of the model obtained at different training stages. 
In the early training stage ($t=1000$s), FedSGD can not detect any object, and CO and FedTOE scheme detect some cars, while the proposed JCDO scheme detects all the cars. 
In the middle training stage ($t=3000$s), FedSGD still cannot detect the object since it suffers from too few update rounds. 
FedTOE and CO can detect most of the cars at this time, while some small object classes, such as pedestrian and misc,  are missed. 
Fortunately, the proposed JCDO scheme can detect them successfully. 
At the later training stage (i.e., with a sufficient long training time), all the schemes have good detection performance. 
We can conclude now that the proposed JCDO scheme is most communication-efficient among these schemes. 
\section{Conclusion}
In this paper, we proposed a communication-efficient FEEL algorithm called JCDO, which alleviates the communication burden through local update compression  and deadline for the uploading process. 
The compression ratio and the deadline are designed for remaining training time minimization, and obtained through alternating iterative optimization. 
Experiments show that the proposed scheme can greatly accelerate FEEL, 
and outperform the existing schemes. 
Future works will focus on the extensions to the scenarios of  model averaging and the aggregation weights optimization. 
\appendices
\section{Proof of Lemma 1} \label{subsec: proof of lemma 2}
With the random sparse operator in (3), for each element in $\mathbf{g}$, we have
\begin{align}
	\mathbb{E}\left[\mathcal{S}\left(\mathbf{g}\right)\right] &= \left[\mathbb{E}\left[Z_1\frac{g_1}{p_1}\right],\mathbb{E}\left[Z_2\frac{g_2}{p_2}\right],...,\mathbb{E}\left[Z_S\frac{g_S}{p_S}\right]\right]\notag\\&=[g_1,g_2,...,g_S]=\mathbf{g}, 
\end{align}
Therefore, the unbiasedness has been proved. 

According to the optimal solution of $\mathscr{P}_1$, $p_i^*=\min\left\{\frac{\left|g_i\right|}{\lambda},1\right\}$, $p_i^*$ is proportional to $\left|g_i\right|$ when $p_i^*<1$. 
Without loss of generality, we sort $\left\{g_i\right\}$ by an decreasing order of $\left|g_i\right|$, i.e., $\left|g_{(1)}\right|\geq\left|g_{(2)}\right|\geq...\geq\left|g_{(S)}\right|$. 
It can be verified that there exists an index $j$ such that $p_{(j)}=1$ if $i\leq j$, and $p_{(i)}=\frac{|g_i|}{\lambda}$ if $j<i\leq D$. 
With the primal feasibility condition $\frac{\sum_{i=1}^S p_{(i)}}{S}=r$, we have 
\begin{align}\label{eq:lambda close form solution}
	\lambda = \frac{\sum_{i=j+1}^S \left|g_{(i)}\right|}{Sr-j}. 
\end{align}
Then we can obtain the variance of the compressor as follows. 
\begin{align}
	\mathbb{E}&\left\|{\rm Comp}\left(\mathbf{g},r\right)-\mathbf{g}\right\|_2^2\notag\\&=\sum_{n=1}^j \mathbb{E}\left|Z_{(n)} \frac{g_{(n)}}{p_{(n)}}-g_{(n)}\right|^2+\sum_{i=j+1}^S \mathbb{E}\left|Z_{(i)}\frac{g_{(i)}}{p_{(i)}}-g_{(i)}\right|^2\notag\\
&=\sum_{i=j+1}^S\left|g_{(i)}\right|^2\left(\frac{1}{p_{(i)}}-1\right)\label{eq: variance middle results}\\
&=\frac{\left(\sum_{i=j+1}^S\left|g_{(i)}\right|\right)^2}{Sr-j}-\sum_{i=j+1}^S\left|g_i\right|^2\label{eq:middle equation of comp}\\
&=\left(\frac{\left\|\bar{\mathbf{g}}_j\right\|_1^2}{\left(Sr-j\right)\left\|\bar{\mathbf{g}}_j\right\|_2^2}-1\right)\frac{\left\|\bar{\mathbf{g}}_j\right\|_2^2}{\left\|\mathbf{g}\right\|_2^2}\left\|\mathbf{g}\right\|_2^2\label{eq: variance final expression}\\
&=F\left(r,\left\{\left|g_i\right|\right\}\right)\left\|\mathbf{g}\right\|_2^2,
\end{align}
where (\ref{eq: variance middle results}) comes from the fact that $p_{(n)}=1$ for $n=1,\cdots, j$. 
(\ref{eq:middle equation of comp}) is due to $p_{(i)}=\frac{|g_i|}{\lambda}$, and $\lambda$ is given in (\ref{eq:lambda close form solution}), 
and $\bar{\mathbf{g}}_j$ is defined as  $\bar{\mathbf{g}}_j=\left[g_{(j+1)},...,g_{(S)}\right]$. 

It is found in (\ref{eq: variance final expression}) that the variance result is a complex function of $r$ and $\left\{{g}_i\right\}$. 
To obtain an estimated expression of $F\left(r,\left\{\left|g_i\right|\right\}\right)$, 
we first relax $\mathscr{P}_1$, that is, discarding $p_i \leq 1, \forall i$. 
Then the corresponding $p_i^{**}$ is given by
\begin{align}
	p_i^{**} = \frac{Sr\left|g_i\right|}{\sum_{i=1}^S\left|g_i\right|}.
\end{align} 
We define the compressor with suboptimal solution as ${\rm Comp}'\left(\mathbf{g},s\right)$, 
the variance is given by
\begin{align}
	\mathbb{E}\left\|{\rm Comp'}\left(\mathbf{g},r\right)-\mathbf{g}\right\|&=\sum_{i=1}^S\left|g_{i}\right|^2\left(\frac{1}{p_i}-1\right)\notag\\
&=\left(\frac{\left(\sum_{i=1}^S\left|g_i\right|\right)^2}{S\sum_{i=1}^S\left|g_i\right|^2}\frac{1}{r}-1\right)\left\|\mathbf{g}\right\|_2^2 \notag\\ 
&=\left(\frac{\left\|\mathbf{g}\right\|_1^2}{S\left\|\mathbf{g}\right\|_2^2}\frac{1}{r}-1\right)\left\|\mathbf{g}\right\|_2^2 \notag \\
&=F'\left(r,\left\{g_i\right\}\right)\left\|\mathbf{g}\right\|_2^2.
\end{align}
\begin{figure}[!]
	\centering
	\includegraphics[width=0.8\linewidth]{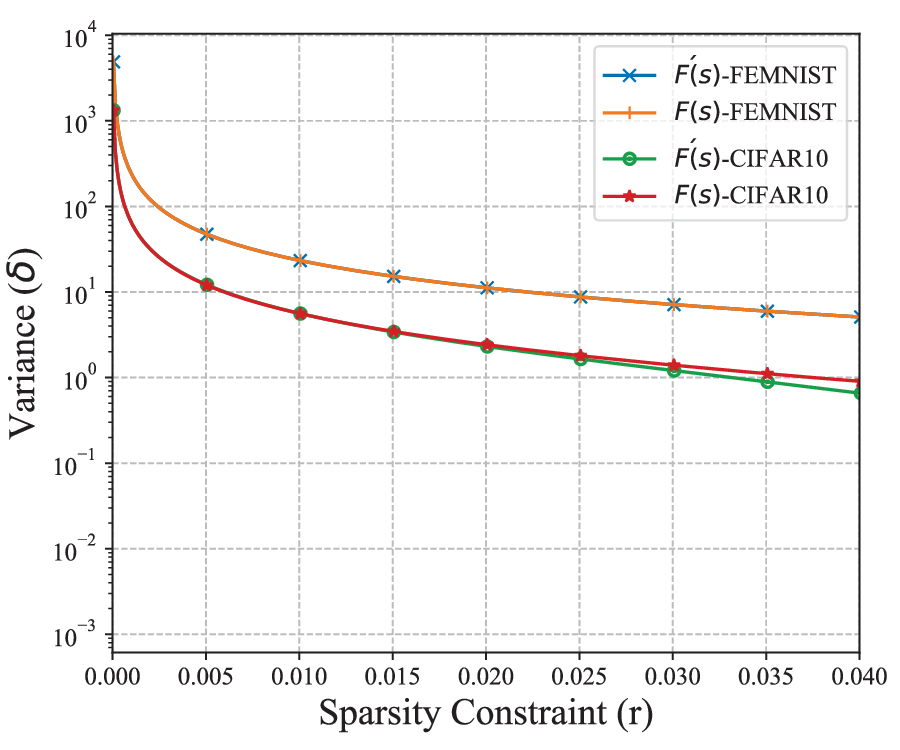}					
	\caption{Variance Coefficient Estimation}
	\label{Fig:sparse evaluation}
\end{figure}
We compare $F'\left(r,\left\{g_i\right\}\right)$ with $F\left(r,\left\{g_i\right\}\right)$ from the experiments, e.g., the classification on FEMNIST and CIFAR10 dataset. As shown in Fig. \ref{Fig:sparse evaluation},  
$F\left(r,\left\{g_i\right\}\right)$ and $F'\left(r,\left\{g_i\right\}\right)$ are almost equivalent in both experiments.\footnote{Note that, we performed comparative experiments under $r\in(0,0.04]$. This is because the gradient itself usually has a large degree of sparsity, according to our experiments, when $r$ is greater than $0.04$, sparsity operation basically has little impact on the training process.}
 Thus we use $F'\left(r,\left\{g_i\right\}\right)$ to estimate $F\left(r,\left\{g_i\right\}\right)$, which ends the proof. 
\section{Proof of Proposition 1} \label{subsec: proof of proposition 1}
The proof of Proposition 1 follows the widely-adopted analysis of convergence rate. 
Firstly, the following lemma is needed. 
\begin{lemma} \label{eq: per round improvement}
    \emph{    With the stepsize $\eta^{(n)}=\frac{\chi}{n+\nu}$, data compression error $\left\{\delta_m^{(n)},\forall m,n\geq t\right\}$, successful transmission probability $\left\{q_m^{(n)},n\geq t\right\}$, the improvement of $n$-round $(n\geq t)$ is bounded by}
    \begin{align}\label{eq: one round improvement}
        \mathbb{E}\left\|\mathbf{w}^{(n+1)}-\mathbf{w}^*\right\|^2\leq \left(1-\frac{3\mu}{2}\eta^{(n)}\right)\left\|\mathbf{w}^{(n)}-\mathbf{w}^*\right\|^2 \notag\\+\left(\eta^{(n)}\right)^2D_n.
    \end{align}
    \emph{where} $D_n=G\sum_{m=1}^M\frac{d_m^2}{d^2} \left(\frac{1+\delta_m^{(n)}}{q_m^{(n)}}-1\right) +  \sum_{m=1}^M\frac{d_m^2}{d^2} \sigma^2$, $\delta_m^{(n)}=\frac{a_m^{(n)}}{r_m^{(n)}}-1$. 
\end{lemma}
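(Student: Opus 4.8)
The plan is to run the standard one-step descent analysis for SGD under strong convexity and smoothness, but with the unbiased aggregated gradient $\widehat{\mathbf{g}}^{(n)}$ of Section III in place of an ordinary stochastic gradient. First I would start from the update $\mathbf{w}^{(n+1)}=\mathbf{w}^{(n)}-\eta^{(n)}\widehat{\mathbf{g}}^{(n)}$ and expand the squared distance,
\begin{align}
\|\mathbf{w}^{(n+1)}-\mathbf{w}^*\|^2=\|\mathbf{w}^{(n)}-\mathbf{w}^*\|^2-2\eta^{(n)}\langle\widehat{\mathbf{g}}^{(n)},\mathbf{w}^{(n)}-\mathbf{w}^*\rangle+(\eta^{(n)})^2\|\widehat{\mathbf{g}}^{(n)}\|^2,
\end{align}
and then take the expectation conditioned on $\mathbf{w}^{(n)}$ over all the per-round randomness (SGD sampling, compression, and transmission outage).

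For the inner-product term I would invoke the asymptotic unbiasedness established in Section III together with Assumption 3, which give $\mathbb{E}[\widehat{\mathbf{g}}^{(n)}]=\nabla L(\mathbf{w}^{(n)})$, reducing the cross term to $\langle\nabla L(\mathbf{w}^{(n)}),\mathbf{w}^{(n)}-\mathbf{w}^*\rangle$. I would then apply $\mu$-strong convexity (Assumption 2, with $\mathbf{u}=\mathbf{w}^*$, $\mathbf{v}=\mathbf{w}^{(n)}$) to lower-bound this by $L(\mathbf{w}^{(n)})-L(\mathbf{w}^*)+\frac{\mu}{2}\|\mathbf{w}^{(n)}-\mathbf{w}^*\|^2$, so that $-2\eta^{(n)}$ times it produces a $-\mu\eta^{(n)}\|\mathbf{w}^{(n)}-\mathbf{w}^*\|^2$ contraction plus a loss-gap term to be handled later.

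The core computation is the second-moment term, which is where $D_n$ appears. I would use the bias--variance split $\mathbb{E}\|\widehat{\mathbf{g}}^{(n)}\|^2=\|\nabla L(\mathbf{w}^{(n)})\|^2+\mathbb{E}\|\widehat{\mathbf{g}}^{(n)}-\nabla L(\mathbf{w}^{(n)})\|^2$, and bound the variance by exploiting that the per-device summands $\frac{d_m Y_m^{(n)}}{d q_m^{(n)}}(\mathbf{g}_m^{(n)}+\mathbf{e}_m^{(n)})$ are independent across $m$ (orthogonal sub-channels, independent compression and SGD sampling), so only the diagonal terms survive. For a single device I would condition successively: $\mathbb{E}[(Y_m^{(n)})^2]=q_m^{(n)}$ since $Y_m^{(n)}$ is Bernoulli; the compression error is zero-mean with variance $\delta_m^{(n)}\|\mathbf{g}_m^{(n)}\|^2$ by Lemma 1 (killing the cross term between $\mathbf{g}_m^{(n)}$ and $\mathbf{e}_m^{(n)}$); and the SGD gap obeys Assumption 3. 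This yields a per-device variance $\frac{d_m^2}{d^2}[(\frac{1+\delta_m^{(n)}}{q_m^{(n)}}-1)\mathbb{E}\|\mathbf{g}_m^{(n)}\|^2+(\mathbb{E}\|\mathbf{g}_m^{(n)}\|^2-\|\nabla L_m\|^2)]$; bounding the first bracket by $G$ via Assumption 4 and the second by $\mathbb{E}\|\mathbf{g}_m^{(n)}-\nabla L_m\|^2\leq\sigma^2$ and summing over $m$ gives exactly $D_n$.

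Finally I would combine the pieces. Substituting back leaves $(1-\mu\eta^{(n)})\|\mathbf{w}^{(n)}-\mathbf{w}^*\|^2-2\eta^{(n)}(L(\mathbf{w}^{(n)})-L(\mathbf{w}^*))+(\eta^{(n)})^2\|\nabla L(\mathbf{w}^{(n)})\|^2+(\eta^{(n)})^2 D_n$. Here I would apply $\ell$-smoothness in the form $\|\nabla L(\mathbf{w}^{(n)})\|^2\leq 2\ell(L(\mathbf{w}^{(n)})-L(\mathbf{w}^*))$ and the step-size condition $\eta^{(n)}\leq\frac{1}{2\ell}$ so that the two loss-gap terms collapse to at most $-\eta^{(n)}(L(\mathbf{w}^{(n)})-L(\mathbf{w}^*))$, which strong convexity converts to $-\frac{\mu}{2}\eta^{(n)}\|\mathbf{w}^{(n)}-\mathbf{w}^*\|^2$; adding this to the $-\mu\eta^{(n)}$ factor already present yields the claimed coefficient $1-\frac{3\mu}{2}\eta^{(n)}$, completing the bound. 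The main obstacle I anticipate is the variance bookkeeping: carefully ordering the three nested expectations (transmission indicator, then compression noise, then SGD sampling) so the cross terms genuinely vanish, and checking that replacing $\mathbb{E}\|\mathbf{g}_m^{(n)}\|^2$ by $G$ is valid because the coefficient $\frac{1+\delta_m^{(n)}}{q_m^{(n)}}-1$ is nonnegative (as $\delta_m^{(n)}\geq 0$ and $q_m^{(n)}\leq 1$), so the substitution indeed produces an upper bound.
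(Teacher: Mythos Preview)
Your proposal is correct and reaches exactly the stated bound. The organization, however, differs from the paper's. The paper introduces two auxiliary iterates, $\overline{\mathbf{w}}^{(n+1)}$ (update with compression but without the outage indicator) and $\widetilde{\mathbf{w}}^{(n+1)}$ (pure SGD update), and uses the tower property to obtain the Pythagorean decomposition
\[
\mathbb{E}\|\mathbf{w}^{(n+1)}-\mathbf{w}^*\|^2=\mathbb{E}\|\mathbf{w}^{(n+1)}-\overline{\mathbf{w}}^{(n+1)}\|^2+\mathbb{E}\|\overline{\mathbf{w}}^{(n+1)}-\widetilde{\mathbf{w}}^{(n+1)}\|^2+\mathbb{E}\|\widetilde{\mathbf{w}}^{(n+1)}-\mathbf{w}^*\|^2,
\]
then bounds the three pieces in separate sub-lemmas (effect of outage, effect of compression, effect of SGD). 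You instead expand $\|\mathbf{w}^{(n+1)}-\mathbf{w}^*\|^2$ directly and perform a single bias--variance split $\mathbb{E}\|\widehat{\mathbf{g}}^{(n)}\|^2=\|\nabla L(\mathbf{w}^{(n)})\|^2+\mathbb{E}\|\widehat{\mathbf{g}}^{(n)}-\nabla L(\mathbf{w}^{(n)})\|^2$, computing the variance per device in one pass. Both routes use the same ingredients (unbiasedness at each layer to kill cross terms, independence across devices, the smoothness bound $\|\nabla L\|^2\leq 2\ell(L-L^*)$, strong convexity, and $\eta^{(n)}\leq 1/(2\ell)$), and your per-device variance term $(\frac{1+\delta_m^{(n)}}{q_m^{(n)}}-1)\mathbb{E}\|\mathbf{g}_m^{(n)}\|^2$ is exactly what one gets by summing the paper's outage and compression contributions. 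The paper's layered decomposition is more modular and makes each error source's contribution explicit as a standalone lemma; your direct approach is more compact and avoids introducing the intermediate iterates.
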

\begin{proof}
    See Appendix \ref{proof of lemma 5}. 
\end{proof}
Based on Lemma \ref{eq: per round improvement}, let $\Delta_n = \mathbb{E} \left\|\mathbf{w}^{(n)}-\mathbf{w}^*\right\|^2$, we have
\begin{align}
    \Delta_{n+1}\leq\left(1-\frac{3\mu}{2}\frac{\chi}{n+\nu}\right)\Delta_n+\frac{\chi^2}{(n+\nu)^2}D_n, n\geq t.  
\end{align}
Here we first proof the following inequality
\begin{align} \label{eq: recursive inequality}
\Delta_{t+l} \leq \frac{\zeta^{(t)}}{t+\nu+l}, 
\end{align}
where $\zeta^{(t)}=\max\left\{\frac{2\chi^2\Xi_t}{{3\mu\chi}-2},\left(t+\nu\right)\left\|\mathbf{w}^{(t)}-\mathbf{w}^*\right\|^2\right\}$, $\Xi_t=G\sum_{m=1}^M\frac{d_m^2}{d^2} \left(\max_{n=t,...,l}\frac{1+\delta_m^{(n)}}{q_m^{(n)}}-1\right) +  \sum_{m=1}^M\frac{d_m^2}{d^2} \sigma^2$.

For $l=1$, we have $\triangle_{t}=\frac{(t+\nu)\triangle_{t}}{t+\nu}\leq\frac{\zeta^{(t)}}{t+\nu}$. Therefore, (\ref{eq: recursive inequality}) holds for $l=0$. 

Then we assume (\ref{eq: recursive inequality}) holds for some $l=r (r\geq0)$, i.e., $\triangle_{t+r}\leq\frac{\zeta^{(t)}}{t+\nu+r}$. 

For $l=r+1$, it follows that 
\begin{align}
    \triangle_{t+r+1}&\leq\left(1-\frac{3\mu\chi}{2(t+r+\nu)}\right)\triangle_{t+r}+\frac{D_{t+r}\chi^2}{(t+r+\nu)^2} \notag\\
    &\leq\left(1-\frac{3\mu\chi}{2(t+r+\nu)}\right)\frac{\zeta^{(t)}}{t+r+\nu}+\frac{D_{t+r}\chi^2}{(t+r+\nu)^2} \notag\\
    &\leq\frac{2(t+r+\nu-1)\zeta^{(t)}-(3\mu\chi-2)\zeta^{(t)}+2D_{t+r}\chi^2}{2(t+r+\nu)^2} \notag\\
    &\leq\frac{(t+r+\nu-1)\zeta^{(t)}}{(t+r+\nu)^2} \label{eq:the comparison between Xi and D}\\
    &\leq\frac{\zeta^{(t)}}{t+r+\nu+1},  \notag
    \end{align}
where (\ref{eq:the comparison between Xi and D}) is because $\Xi_t\geq D_n, \forall n\geq t$.     
It is found that (\ref{eq: recursive inequality}) holds for $l=r+1$ when holds for $l=r$. By now we can conclude that (\ref{eq: recursive inequality}) holds for $\forall l=0,1,2..., N_{\epsilon}^{(t)}$. 

We can further have 
\begin{align}
    \mathbb{E}\left\|\mathbf{w}^{(t+l)}-\mathbf{w}^*\right\|^2&\leq \frac{\zeta^{(t)}}{t+\nu+l}\notag\\&\leq\frac{(t+\nu)\left\|\mathbf{w}^{(t)}-\mathbf{w}^*\right\|^2+ \frac{2\chi^2\Xi_t}{{3\mu\chi}-2}}{t+\nu+l}.
\end{align}
Then with the Assumption 1 and 2, we can conclude that 
\begin{align}
    \mathbb{E}&\left(L\left(\mathbf{w}^{(t+l)}\right)-L\left(\mathbf{w}^*\right)\right)\notag\\&\leq \frac{\ell}{2}\mathbb{E}\left\|\mathbf{w}^{(t+l)}-\mathbf{w}^*\right\|^2 \notag\\
    &\leq\frac{\frac{\ell(t+\nu)}{2}\left\|\mathbf{w}^{(t)}-\mathbf{w}^*\right\|^2+\frac{\ell\chi^2\Xi_t}{{3\mu\chi}-2}}{t+\nu+l} \notag \\
    &\leq \frac{\frac{\ell(t+\nu)}{\mu}\left(L\left(\mathbf{w}^{(t)}\right)-L\left(\mathbf{w}^*\right)\right)+\frac{\ell\chi^2\Xi_t}{{3\mu\chi}-2}}{t+\nu+l}, 
\end{align}
which ends the proof. 
\section{Proof of Lemma 3}\label{appendix: proof of remaining communicaiton rounds}
Based on Proposition 1,  $\{r_m^{n}=r_m^{(t)},n\geq t\}$,  the deadline $\{T_D^{(n)}=T_D^{(t)}, n\geq t\}$ and Assumption 5, we have 
\begin{align}\label{eq: intermideate result of convergence analysis}
    \mathbb{E}&\left(L\left(\mathbf{w}^{(t+l)}\right)-L\left(\mathbf{w}^*\right)\right)\notag\\
	&\leq \frac{1}{t+\nu+l}\bigg({{\frac{\ell(t+\nu)}{\mu}\left(L\left(\mathbf{w}^{(t)}\right)-L\left(\mathbf{w}^*\right)\right)}}\notag\\&~~~~~~~~~~~~~~+{A\sum_{m=1}^M\frac{d_m^2}{d^2}\sigma^2}+{AG\sum_{m=1}^M\big(\frac{\alpha_m^{(t)}}{r_m^{(t)}q_m^{(t)}}-1\big)}\bigg), 
\end{align}
Let the right hand sight of (\ref{eq: intermideate result of convergence analysis}) equal to $\epsilon$, then we have  $\mathbb{E}\left(L\left(\mathbf{w}^{(t+l)}\right)-L\left(\mathbf{w}^*\right)\right) \leq \epsilon$. 
Therefore, the expectation of $N_{\epsilon}^{(t)}$ should meet 
\begin{align}
    N_{\epsilon}^{(t)} &\leq l = {\frac{\ell(t+\nu)}{\mu\epsilon}\left(L\left(\mathbf{w}^{(t)}\right)-L\left(\mathbf{w}^*\right)\right)}-t-\nu \notag\\~~~~&+\frac{A}{\epsilon}\sum_{m=1}^M\frac{d_m^2}{d^2}\sigma^2 + \frac{AG}{\epsilon}\sum_{m=1}^M\left(\frac{\alpha_m^{(t)}}{r_m^{(t)}q_m^{(t)}}-1\right). 
\end{align}
It is the same as Lemma 3, which ends the proof. 
\section{Proof of Proposition 2}\label{subsecL proof of proposition 2}
Substitute (11) into (21), $\mathscr{P}_4$ can be rewritten as follow. 
\begin{align}
	\mathscr{P}_2: \min_{s_m^{(t)}}~&
	\frac{1}{r_m^{(t)}}e^{\frac{BN_0\left(2^{\frac{bSr_m^{(t)}}{BT_{D,m}^{(t)}}}-1\right)}{P_m\sigma_m^2}},\\
	{\rm s.t.}~& 0< r_m^{(t)}\leq 1. \notag
\end{align}
Take logarithm of objective function and remove irrelevant term, the equivalent objective function is given by
\begin{align}
	g\left(r_m^{(t)}\right)=\frac{BN_0}{P_m\sigma_m^2}\left(2^{\frac{bSr_m^{(t)}}{BT_{D,m}^{(t)}}}-1\right) - \ln\left(r_m^{(t)}\right). 
\end{align}
Obviously, $g\left(r_m^{(t)}\right)$ is convex, let $g'\left(r_m^{(t)}\right)=0$, the desired results can be obtained,which ends the proof.  

\section{Proof of Lemma 4}\label{subsec:proof of convex lemma}
To proof Lemma 4, we only needs to proof that the following function is convex. 
\begin{align}
	h\left(x\right)=xe^{\frac{2^{\frac{c}{x-b}}-1}{y}},
\end{align}
where $x\geq b$ and $b,c,y \geq 0$.

The second derivative of $h(x)$ is given by
\begin{align}
	h''\left(x\right) &= \ln\left(2\right)\,c2^\frac{c}{x-b}\mathrm{e}^\frac{2^\frac{c}{x-b}-1}{y}\cdot\notag\\ &\dfrac{\left(\ln\left(2\right)\,cx2^\frac{c}{x-b}+\ln\left(2\right)\,cyx+2by(x-b)\right)}{y^2\cdot\left(x-b\right)^4}
\end{align}
It is found that $h''\left(x\right)\geq0$ holds for $\forall x \geq b$. Therefore, $h(x)$ ix convex, which ends the proof. 

\section{Proof of Lemma \ref{eq: per round improvement}}\label{proof of lemma 5}
We first introduce some additional notations as follows. 
\begin{align}
    \mathbf{w}^{(n+1)}&=\mathbf{w}^{(n)}-\eta^{(n)}\sum_{m=1}^M\frac{d_m}{d}\frac{Y_m^{(n)}}{q_m^{(n)}} {\rm Comp}\left({\mathbf{g}}_m^{(n)},r_m^{(n)}\right),\\
    \overline{\mathbf{w}}^{(n+1)}&=\mathbf{w}^{(n)}-\eta^{(n)}\sum_{m=1}^M\frac{d_m}{d}{\rm Comp}\left({\mathbf{g}}_m^{(n)},r_m^{(n)}\right),\\
    \widetilde{\mathbf{w}}^{(n)}&=\mathbf{w}^{(n)}-\eta^{(n)}\sum_{m=1}^M\frac{d_m}{d}{\mathbf{g}}_m^{(n)}.
\end{align}
Then based on the property of the compressor and unbiased aggregation, we have the following Lemmas. 
\begin{lemma}
\emph{(Unbiasedness) Consider a specific round $n=0,1,...,N$, the expectation gap between $\mathbf{w}^{(t+1)}$ and $\mathbf{w}^*$ can be decomposed as follows. }
\begin{align} \label{eq: decompose of gap}
    \mathbb{E}&\left\|\mathbf{w}^{(n+1)}-\mathbf{w}^*\right\|^2=\mathbb{E}\left\|{\mathbf{w}}^{(n+1)}-\overline{\mathbf{w}}^{(n+1)}\right\|^2\notag\\&~~~~~~+\mathbb{E}\left\|\overline{\mathbf{w}}^{(n+1)}-\widetilde{\mathbf{w}}^{(n+1)}\right\|^2+\mathbb{E}\left\|\widetilde{\mathbf{w}}^{(n+1)}-\mathbf{w}^*\right\|^2. 
\end{align}
\end{lemma}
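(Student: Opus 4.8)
The plan is to establish the decomposition as an orthogonality (Pythagorean) identity, driven by the successive unbiasedness of the three noise sources. First I would write the error as the telescoping sum
\begin{align}
\mathbf{w}^{(n+1)}-\mathbf{w}^* = \left(\mathbf{w}^{(n+1)}-\overline{\mathbf{w}}^{(n+1)}\right)+\left(\overline{\mathbf{w}}^{(n+1)}-\widetilde{\mathbf{w}}^{(n+1)}\right)+\left(\widetilde{\mathbf{w}}^{(n+1)}-\mathbf{w}^*\right),
\end{align}
and read off from the three defining expressions that the first (outage) increment equals $-\eta^{(n)}\sum_{m}\frac{d_m}{d}\big(\frac{Y_m^{(n)}}{q_m^{(n)}}-1\big){\rm Comp}(\mathbf{g}_m^{(n)},r_m^{(n)})$ while the second (compression) increment equals $-\eta^{(n)}\sum_{m}\frac{d_m}{d}\mathbf{e}_m^{(n)}$, with $\mathbf{e}_m^{(n)}={\rm Comp}(\mathbf{g}_m^{(n)},r_m^{(n)})-\mathbf{g}_m^{(n)}$. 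Expanding $\|\cdot\|^2$ produces the three squared-norm terms on the right-hand side plus three cross terms, so the entire claim reduces to showing that each of the three cross terms vanishes in expectation.

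Next I would eliminate the cross terms by iterated conditioning that respects the order in which the randomness is generated: the SGD minibatch (yielding $\mathbf{g}_m^{(n)}$), then the stochastic sparsification variables $Z_i$ (yielding ${\rm Comp}$), then the transmission indicators $Y_m^{(n)}$. For the cross term pairing the compression increment with the SGD gap, I condition on $\mathbf{w}^{(n)}$ and $\{\mathbf{g}_m^{(n)}\}$: the SGD gap $\widetilde{\mathbf{w}}^{(n+1)}-\mathbf{w}^*$ is then fixed, while $\mathbb{E}[\mathbf{e}_m^{(n)}\mid \mathbf{g}_m^{(n)}]=\mathbf{0}$ by the unbiasedness of ${\rm Comp}$ proved in Lemma \ref{lemma: cmopressor properties}, so the inner product averages to zero. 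For the two cross terms containing the outage increment, I condition additionally on the realized compressed gradients $\{{\rm Comp}(\mathbf{g}_m^{(n)},r_m^{(n)})\}$, leaving only $\{Y_m^{(n)}\}$ random; since each $Y_m^{(n)}$ is Bernoulli with mean $q_m^{(n)}$ we have $\mathbb{E}[Y_m^{(n)}/q_m^{(n)}-1]=0$, which annihilates both terms.

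The main obstacle is the same decoupling subtlety that underlies the aggregation design leading to (\ref{eq:aggregation scheme}): the indicator $Y_m^{(n)}$ is not literally independent of the compression, because the success event $\{T_{{\sf U},m}^{(n)}\le T_D^{(n)}-T_{{\sf C},m}^{(n)}\}$ depends on the realized number of nonzero entries $s_m^{(n)}$, which is itself an output of the sparsification. Hence the factorization $\mathbb{E}_{Y}[Y_m^{(n)}/q_m^{(n)}-1\mid {\rm Comp}]=0$ is exact only in the asymptotic regime $S\to\infty$, where the law of large numbers pins $s_m^{(n)}/S\to r_m^{(n)}$ and makes $q_m^{(n)}$ (and thus the conditional law of $Y_m^{(n)}$) independent of the compression realization; this is precisely the mechanism already used in Lemma \ref{lemma: transmission outage probability}. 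I would state this $S\to\infty$ decoupling explicitly to justify pulling $\mathbb{E}[Y_m^{(n)}/q_m^{(n)}]=1$ outside the conditional inner products, after which collecting the surviving squared-norm terms gives (\ref{eq: decompose of gap}) and completes the proof.
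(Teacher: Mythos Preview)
Your proposal is correct and follows essentially the same route as the paper: both arguments rest on the two conditional unbiasedness relations $\mathbb{E}_{Y}\bigl[\mathbf{w}^{(n+1)}\bigr]=\overline{\mathbf{w}}^{(n+1)}$ and $\mathbb{E}_{\rm Comp}\bigl[\overline{\mathbf{w}}^{(n+1)}\bigr]=\widetilde{\mathbf{w}}^{(n+1)}$, which kill the cross terms in the telescoping expansion. Your treatment is in fact more careful than the paper's, which simply states the two unbiasedness identities without flagging the $S\to\infty$ decoupling needed for the first one; you make that LLN step explicit, consistent with Lemma~\ref{lemma: transmission outage probability} and the derivation of~(\ref{eq:aggregation scheme}).
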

\begin{proof}
    The proof of (\ref{eq: decompose of gap}) can be obtained through the following unbiased properties. 
    \begin{align}
        \mathbb{E}_{Y_m^{(n)}}\left[\mathbf{w}^{(n+1)}\right]&=\overline{\mathbf{w}}^{(n+1)},\\
        \mathbb{E}_{\rm Comp}\left[\overline{\mathbf{w}}^{(n+1)}\right]&=\widetilde{\mathbf{w}}^{(n+1)}.
    \end{align}
\end{proof}
\begin{lemma}
\emph{(Effect of outage) The expectation gap between $\mathbf{w}^{(n+1)}$ and $\overline{\mathbf{w}}^{(n+1)}$ can be bounded by } 
\begin{align}\label{eq: effect of outage}
    \mathbb{E}&\left\|{\mathbf{w}}^{(n+1)}-\overline{\mathbf{w}}^{(n+1)}\right\|^2\notag\\&\leq\left(\eta^{(n)}\right)^2\sum_{m=1}^M\frac{d_m^2}{d^2} \left(\frac{1}{q_m^{(n)}}-1\right)\left(1+\delta_m^{(n)}\right)\left\|{\mathbf{g}}_m^{(n)}\right\|^2.
\end{align}
\end{lemma}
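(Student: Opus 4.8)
The plan is to write the gap $\mathbf{w}^{(n+1)}-\overline{\mathbf{w}}^{(n+1)}$ as a single device-indexed sum, expand its squared norm, and separate the outage randomness (the indicators $Y_m^{(n)}$) from the compression randomness, everything being understood conditional on the current gradients $\{\mathbf{g}_m^{(n)}\}$. Subtracting the two update rules introduced at the start of Appendix~\ref{proof of lemma 5} gives
\begin{align}
\mathbf{w}^{(n+1)}-\overline{\mathbf{w}}^{(n+1)}=-\eta^{(n)}\sum_{m=1}^M\frac{d_m}{d}\left(\frac{Y_m^{(n)}}{q_m^{(n)}}-1\right){\rm Comp}\left(\mathbf{g}_m^{(n)},r_m^{(n)}\right),
\end{align}
so that $\mathbb{E}\|\mathbf{w}^{(n+1)}-\overline{\mathbf{w}}^{(n+1)}\|^2$ equals $(\eta^{(n)})^2$ times the expected squared norm of this sum.

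First I would expand that squared norm into diagonal ($m=m'$) and off-diagonal ($m\neq m'$) contributions. Since $Y_m^{(n)}$ is Bernoulli with mean $q_m^{(n)}$ (Lemma~\ref{lemma: transmission outage probability}), the centered factor satisfies $\mathbb{E}[Y_m^{(n)}/q_m^{(n)}-1]=0$. Using that the indicators are mutually independent across devices and independent of the stochastic sparsification, each off-diagonal term factors into a product carrying a lone $\mathbb{E}[Y_m^{(n)}/q_m^{(n)}-1]=0$ and hence vanishes; only the diagonal terms survive.

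For the diagonal term of device $m$ I would again invoke independence of $Y_m^{(n)}$ and ${\rm Comp}(\mathbf{g}_m^{(n)},r_m^{(n)})$ to factor the expectation. The outage factor is $\mathbb{E}[(Y_m^{(n)}/q_m^{(n)}-1)^2]=\mathrm{Var}(Y_m^{(n)})/(q_m^{(n)})^2=(1/q_m^{(n)}-1)$, and the compression factor follows from the unbiasedness in Lemma~\ref{lemma: cmopressor properties}: since $\mathbb{E}[{\rm Comp}(\mathbf{g}_m^{(n)},r_m^{(n)})]=\mathbf{g}_m^{(n)}$, the cross term cancels and $\mathbb{E}\|{\rm Comp}(\mathbf{g}_m^{(n)},r_m^{(n)})\|^2=\|\mathbf{g}_m^{(n)}\|^2+\delta_m^{(n)}\|\mathbf{g}_m^{(n)}\|^2=(1+\delta_m^{(n)})\|\mathbf{g}_m^{(n)}\|^2$. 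Multiplying these two factors by $(d_m/d)^2$ and summing over $m$ reproduces exactly the right-hand side of (\ref{eq: effect of outage})---in fact as an equality, which a fortiori gives the stated bound.

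The main obstacle will be rigorously justifying the decoupling of outage and compression randomness used to factor both the off-diagonal and diagonal expectations; this rests on treating $\{Y_m^{(n)}\}$ as mutually independent Bernoulli variables independent of the sparsification masks, a property that holds in the asymptotic $S\to\infty$ regime invoked (via the law of large numbers) when establishing $q_m^{(n)}$ in Lemma~\ref{lemma: transmission outage probability} and the aggregation unbiasedness. Once that independence structure is granted, the remainder is a routine second-moment computation.
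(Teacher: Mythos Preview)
Your proposal is correct and follows essentially the same route as the paper: write the gap as $-\eta^{(n)}\sum_m\frac{d_m}{d}\big(\tfrac{Y_m^{(n)}}{q_m^{(n)}}-1\big){\rm Comp}(\mathbf{g}_m^{(n)},r_m^{(n)})$, use the zero-mean/independence of the centered indicators to kill cross terms, and then factor each diagonal term into the Bernoulli variance $\tfrac{1}{q_m^{(n)}}-1$ and the compression second moment $(1+\delta_m^{(n)})\|\mathbf{g}_m^{(n)}\|^2$. The paper simply jumps over the cross-term vanishing step that you spell out, and the final $\leq$ there comes only from the $\delta_m^{(n)}$ estimate of Lemma~\ref{lemma: cmopressor properties}, consistent with your observation that the computation is otherwise an equality.
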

\begin{proof}
    \begin{align}
        \mathbb{E}&\left\|{\mathbf{w}}^{(n+1)}-\overline{\mathbf{w}}^{(n+1)}\right\|^2\\&=\left(\eta^{(n)}\right)^2\mathbb{E}\left\|\sum_{m=1}^M\frac{d_m}{d}\left(\frac{Y_m^{(n)}}{q_m^{(n)}}-{1}\right) {\rm Comp}\left({\mathbf{g}}_m^{(n)},r_m^{(n)}\right)\right\|^2 \notag\\
        &=\left(\eta^{(n)}\right)^2\sum_{m=1}^M\bigg[\frac{d_m^2}{d^2}\mathbb{E}_{Y_m^{(n)}}\left(\frac{Y_m^{(n)}}{q_m^{(n)}}-1\right)^2 \notag\\&\qquad\qquad\qquad\qquad\qquad\qquad\mathbb{E}_{\rm Comp}\left\|{\rm Comp}\left({\mathbf{g}}_m^{(n)},r_m^{(n)}\right)\right\|^2 \bigg]\notag\\
        &\leq \left(\eta^{(n)}\right)^2\sum_{m=1}^M\frac{d_m^2}{d^2} \left(\frac{1}{q_m^{(n)}}-1\right)\left(1+\delta_m^{(n)}\right)\left\|{\mathbf{g}}_m^{(n)}\right\|^2. \notag
        \end{align}
\end{proof}
\vspace{-4mm}
\begin{lemma}
    \emph{(Effect of compression) The expectation gap between $\overline{\mathbf{w}}^{(n+1)}$ and $\widetilde{\mathbf{w}}^{(n+1)}$ can be bounded by}
    \begin{align}\label{eq:effect of compression}
        \mathbb{E}\left\|\overline{\mathbf{w}}^{(n+1)}-\widetilde{\mathbf{w}}^{(n+1)}\right\|^2\leq\left(\eta^{(n)}\right)^{2}\sum_{m=1}^M\frac{d_m^2}{d^2}\delta_m^{(n)}\left\|{\mathbf{g}}_m^{(n)}\right\|^2.
    \end{align}
\end{lemma}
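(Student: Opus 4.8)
The plan is to rewrite the gap $\overline{\mathbf{w}}^{(n+1)}-\widetilde{\mathbf{w}}^{(n+1)}$ as a single scaled sum of per-device compression errors, and then reduce the squared norm of that sum to a sum of per-device variances using the unbiasedness of the compressor established in Lemma~\ref{lemma: cmopressor properties}. First I would subtract the defining expressions for $\overline{\mathbf{w}}^{(n+1)}$ and $\widetilde{\mathbf{w}}^{(n+1)}$, whereupon the common term $\mathbf{w}^{(n)}$ cancels and I obtain $\overline{\mathbf{w}}^{(n+1)}-\widetilde{\mathbf{w}}^{(n+1)}=-\eta^{(n)}\sum_{m=1}^{M}\frac{d_m}{d}\big({\rm Comp}(\mathbf{g}_m^{(n)},r_m^{(n)})-\mathbf{g}_m^{(n)}\big)$. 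Writing $\mathbf{e}_m^{(n)}:={\rm Comp}(\mathbf{g}_m^{(n)},r_m^{(n)})-\mathbf{g}_m^{(n)}$ for the compression error of device $m$, taking the squared norm, and pulling out $(\eta^{(n)})^2$ leaves me to bound $\mathbb{E}\big\|\sum_{m=1}^M\frac{d_m}{d}\mathbf{e}_m^{(n)}\big\|^2$.

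The crux is to expand this squared norm of a sum and discard the cross terms. By Lemma~\ref{lemma: cmopressor properties} the compressor is unbiased, so $\mathbb{E}[\mathbf{e}_m^{(n)}\mid\mathbf{g}_m^{(n)}]=\mathbf{0}$; since the sparsification randomness (the Bernoulli variables $Z_i$) is drawn independently across devices, the errors $\mathbf{e}_m^{(n)}$ and $\mathbf{e}_{m'}^{(n)}$ are independent and conditionally zero-mean for $m\neq m'$, whence $\mathbb{E}\langle\mathbf{e}_m^{(n)},\mathbf{e}_{m'}^{(n)}\rangle=0$. This collapses the expansion to the diagonal, giving $\mathbb{E}\big\|\sum_{m=1}^M\frac{d_m}{d}\mathbf{e}_m^{(n)}\big\|^2=\sum_{m=1}^M\frac{d_m^2}{d^2}\mathbb{E}\|\mathbf{e}_m^{(n)}\|^2$. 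Finally I would invoke the variance formula of Lemma~\ref{lemma: cmopressor properties} termwise, $\mathbb{E}\|\mathbf{e}_m^{(n)}\|^2=\delta_m^{(n)}\|\mathbf{g}_m^{(n)}\|^2$ with $\delta_m^{(n)}=a_m^{(n)}/r_m^{(n)}-1$ exactly as introduced in Lemma~\ref{eq: per round improvement}, and reinstate $(\eta^{(n)})^2$; this produces the asserted bound (in fact with equality, from which the stated inequality follows at once).

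The main obstacle is the rigorous justification of the cross-term cancellation, i.e., that the compression errors are conditionally independent across devices with zero conditional mean given the local gradients. This is the same mechanism that drives the corresponding step in the preceding outage lemma, so I would phrase it carefully by conditioning on the current gradients $\{\mathbf{g}_m^{(n)}\}$ and treating the compression randomness as the only stochasticity active at this stage; independence then follows because the devices sparsify on disjoint, independently generated Bernoulli masks. A minor bookkeeping point to flag is the harmless notational slip between $\widetilde{\mathbf{w}}^{(n)}$ in the defining display and $\widetilde{\mathbf{w}}^{(n+1)}$ in the statement, which I would reconcile by reading both as the noiseless one-step update from $\mathbf{w}^{(n)}$.
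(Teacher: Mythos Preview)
Your proposal is correct and mirrors the paper's own proof essentially step for step: substitute the definitions to write the gap as $\eta^{(n)}$ times a weighted sum of compression errors, use unbiasedness and independence across devices to kill the cross terms, and apply Lemma~\ref{lemma: cmopressor properties} termwise. Your remarks on conditioning to justify the cross-term cancellation and on the $\widetilde{\mathbf{w}}^{(n)}$ versus $\widetilde{\mathbf{w}}^{(n+1)}$ notational slip are apt and more explicit than the paper, which simply writes the second equality without comment.
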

\vspace{-1mm}
\begin{proof}
    \begin{align}
        \mathbb{E}&\left\|\overline{\mathbf{w}}^{(n+1)}-\widetilde{\mathbf{w}}^{(n+1)}\right\|^2\notag\\&=\left(\eta^{(n)}\right)^2\mathbb{E}\left\|\sum_{m=1}^M\frac{d_m}{d}\left({\rm Comp}\left({\mathbf{g}}_m^{(n)},r_m^{(n)}\right)-{\mathbf{g}}_m^{(n)}\right)\right\|^2 \notag\\
        &=\left(\eta^{(n)}\right)^2\sum_{m=1}^M\frac{d_m^2}{d^2}\mathbb{E}\left\|{\rm Comp}\left({\mathbf{g}}_m^{(n)},r_m^{(n)}\right)-{\mathbf{g}}_m^{(n)}\right\|^2 \notag\\
        &\leq\left(\eta^{(n)}\right)^{2}\sum_{m=1}^M\frac{d_m^2}{d^2}\delta_m^{(n)}\left\|{\mathbf{g}}_m^{(n)}\right\|^2. \notag
        \end{align}
\end{proof}
\begin{lemma}
    \emph{(Effect of SGD) The expectation gap between $\widetilde{\mathbf{w}}^{(n+1)}$ and $\mathbf{w}^*$ can be bounded by}
    \begin{align}\label{eq:effect of SGD}
        \left\|\widetilde{\mathbf{w}}^{(n+1)}-\mathbf{w}^*\right\|^2\leq(1-\frac{3\mu}{2}\eta^{(n)})\left\|\mathbf{w}^{(n)}-\mathbf{w}^*\right\|^2 \notag\\\qquad\qquad\qquad\qquad\qquad+ \left(\eta^{(t)}\right)^2\sum_{m=1}^M\frac{d_m^2}{d^2} \sigma^2.
    \end{align}
\end{lemma}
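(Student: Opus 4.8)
The plan is to carry out the standard one-step contraction analysis for a strongly convex, $\ell$-smooth objective, applied to the ``virtual'' iterate $\widetilde{\mathbf{w}}^{(n+1)}=\mathbf{w}^{(n)}-\eta^{(n)}\mathbf{d}^{(n)}$ with $\mathbf{d}^{(n)}=\sum_{m=1}^M\frac{d_m}{d}\mathbf{g}_m^{(n)}$. First I would observe that, by Assumption 3 and the definition of the global loss as the data-weighted sum of the local losses, the aggregated stochastic gradient is unbiased, i.e.\ $\mathbb{E}[\mathbf{d}^{(n)}]=\nabla L(\mathbf{w}^{(n)})$, where the expectation is taken over the SGD sampling only (the compression and outage randomness having already been isolated in the preceding three lemmas of the decomposition).

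Then I would expand the square and take this expectation:
\begin{align}
\mathbb{E}\|\widetilde{\mathbf{w}}^{(n+1)}-\mathbf{w}^*\|^2 &= \|\mathbf{w}^{(n)}-\mathbf{w}^*\|^2 - 2\eta^{(n)}\langle \nabla L(\mathbf{w}^{(n)}),\mathbf{w}^{(n)}-\mathbf{w}^*\rangle \notag\\ &\quad + (\eta^{(n)})^2\,\mathbb{E}\|\mathbf{d}^{(n)}\|^2. \notag
\end{align}
For the last term I would use the bias--variance split $\mathbb{E}\|\mathbf{d}^{(n)}\|^2=\|\nabla L(\mathbf{w}^{(n)})\|^2+\mathbb{E}\|\mathbf{d}^{(n)}-\nabla L(\mathbf{w}^{(n)})\|^2$; since the per-device SGD noises are independent and zero-mean, the cross terms vanish and Assumption 3 gives $\mathbb{E}\|\mathbf{d}^{(n)}-\nabla L(\mathbf{w}^{(n)})\|^2\leq\sum_{m=1}^M\frac{d_m^2}{d^2}\sigma^2$, producing the stated noise term.

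Next I would invoke the two structural assumptions. $\mu$-strong convexity (Assumption 2), applied at $\mathbf{w}^{(n)}$ and $\mathbf{w}^*$, yields $\langle\nabla L(\mathbf{w}^{(n)}),\mathbf{w}^{(n)}-\mathbf{w}^*\rangle\geq L(\mathbf{w}^{(n)})-L(\mathbf{w}^*)+\frac{\mu}{2}\|\mathbf{w}^{(n)}-\mathbf{w}^*\|^2$, while $\ell$-smoothness (Assumption 1) together with the optimality of $\mathbf{w}^*$ gives $\|\nabla L(\mathbf{w}^{(n)})\|^2\leq 2\ell\big(L(\mathbf{w}^{(n)})-L(\mathbf{w}^*)\big)$. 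Substituting both, the coefficient multiplying the nonnegative gap $L(\mathbf{w}^{(n)})-L(\mathbf{w}^*)$ becomes $2\eta^{(n)}(\ell\eta^{(n)}-1)$, and the $\|\mathbf{w}^{(n)}-\mathbf{w}^*\|^2$ terms collect to a factor $1-\mu\eta^{(n)}$.

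The key step is to use the stepsize condition $\eta^{(n)}\leq\frac{1}{2\ell}$, so that $\ell\eta^{(n)}-1\leq-\tfrac12$ and hence the gap coefficient is $\leq-\eta^{(n)}<0$. The gap term can then be eliminated after one further application of strong convexity, $L(\mathbf{w}^{(n)})-L(\mathbf{w}^*)\geq\frac{\mu}{2}\|\mathbf{w}^{(n)}-\mathbf{w}^*\|^2$, which turns $-\eta^{(n)}\big(L(\mathbf{w}^{(n)})-L(\mathbf{w}^*)\big)$ into $-\frac{\mu}{2}\eta^{(n)}\|\mathbf{w}^{(n)}-\mathbf{w}^*\|^2$; combined with the earlier $1-\mu\eta^{(n)}$ this gives exactly the factor $1-\frac{3\mu}{2}\eta^{(n)}$ and the claimed inequality. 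I expect the only delicate point to be the bookkeeping of the sign and magnitude of the gap-term coefficient: the second use of strong convexity is legitimate only because that coefficient is made nonpositive, which is precisely where the condition $\eta^{(n)}\leq\frac{1}{2\ell}$ is used. The variance and inner-product bounds themselves are routine.
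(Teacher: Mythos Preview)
Your proposal is correct and follows essentially the same route as the paper: both split off the SGD variance term (the paper via an $A_1+A_2$ decomposition, you via a bias--variance split of $\mathbb{E}\|\mathbf{d}^{(n)}\|^2$), then combine strong convexity on the inner product, smoothness on $\|\nabla L\|^2$, a second application of strong convexity on the function gap, and the stepsize bound $\eta^{(n)}\le\frac{1}{2\ell}$ to reach the factor $1-\frac{3\mu}{2}\eta^{(n)}$. The only cosmetic difference is that the paper applies strong convexity to the gap term first and the stepsize condition second, whereas you reverse the order; the arithmetic is identical.
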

\vspace{-1mm}
\begin{proof}
    \begin{align}
        \mathbb{E}&\left\|\widetilde{\mathbf{w}}^{(n+1)}-\mathbf{w}^*\right\|^2 \notag\\
        =&\mathbb{E}\left\|{\mathbf{w}}^{(n)}-\eta^{(n)}\sum_{m=1}^M\frac{d_m}{d}\widetilde{\mathbf{g}}_m^{(n)}-\mathbf{w}^*\right\|^2 \notag\\
        =&\mathbb{E}\left\|{\mathbf{w}}^{(n)}-\eta^{(n)}\sum_{m=1}^M\frac{d_m}{d}\nabla L_m\left(\mathbf{w}^{(n)}\right)-\mathbf{w}^*\right. \notag\\&\qquad\qquad\qquad~+\left.\eta^{(n)}\sum_{m=1}^M\frac{d_m}{d}\left(\nabla L_m\left(\mathbf{w}^{(n)}\right)-\widetilde{\mathbf{g}}_m^{(n)}\right)\right\|^2 \notag\\
        =&\underbrace{\left\|{\mathbf{w}}^{(n)}-\eta^{(n)}\sum_{m=1}^M\frac{d_m}{d}\nabla L_m\left(\mathbf{w}^{(n)}\right)-\mathbf{w}^*\right\|^2}_{A_1}\notag\\&\qquad~~+\underbrace{\left(\eta^{(n)}\right)^2\sum_{m=1}^M\frac{d_m^2}{d^2}\mathbb{E}\left\|\nabla L_m\left(\mathbf{w}^{(n)}\right)-\widetilde{\mathbf{g}}_m^{(n)}\right\|^2}_{A_2}.
        \end{align}
    First analyze the item $A_1$. 
    \begin{align}
        A_1 \notag
        =&\left\|{\mathbf{w}}^{(n)}-\eta^{(n)}\sum_{m=1}^M\frac{d_m}{d}\nabla L_m\left(\mathbf{w}^{(n)}\right)-\mathbf{w}^*\right\|^2\notag\\
        =&\left\|{\mathbf{w}}^{(n)}-\eta^{(n)}\nabla L\left(\mathbf{w}^{(n)}\right)-\mathbf{w}^*\right\|^2\notag\\
        =&\left\|{\mathbf{w}}^{(n)}-\mathbf{w}^*\right\|^2-2\eta^{(n)}\left<{\mathbf{w}}^{(n)}-\mathbf{w}^*,\nabla L\left(\mathbf{w}^{(n)}\right)\right>\notag\\&\qquad\qquad\qquad\qquad\qquad\qquad+\left(\eta^{(n)}\right)^2\left\|\nabla L\left(\mathbf{w}^{(n)}\right)\right\|^2\notag\\
        \leq &\left\|{\mathbf{w}}^{(n)}-\mathbf{w}^*\right\|^2+2\ell\left(\eta^{(n)}\right)^2\left(L\left({\mathbf{w}}^{(n)}\right)-L\left(\mathbf{w}^*\right)\right)\notag\\&-\mu\eta^{(n)}\left\|{\mathbf{w}}^{(n)}-\mathbf{w}^*\right\|^2 \notag
        -2\eta^{(n)}\left(L\left({\mathbf{w}}^{(n)}\right)-L\left(\mathbf{w}^*\right)\right)\notag\\
        =&\left(1-\mu\eta^{(n)}\right)\left\|{\mathbf{w}}^{(n)}-\mathbf{w}^*\right\|^2\notag\\&\qquad\qquad-2\eta^{(n)}\left(1-\ell\eta^{(n)}\right)\left(L\left({\mathbf{w}}^{(n)}\right)-L\left(\mathbf{w}^*\right)\right)\notag\\
        \leq &\left(1-\mu\eta^{(n)}\right)\left\|{\mathbf{w}}^{(n)}-\mathbf{w}^*\right\|^2\notag\\&\qquad\qquad\qquad\qquad-\mu\eta^{(n)}\left(1-\ell\eta^{(n)}\right)\left\|{\mathbf{w}}^{(n)}-\mathbf{w}^*\right\|^2 \notag\\
        \leq &(1-\frac{3\mu}{2}\eta^{(n)})\left\|{\mathbf{w}}^{(n)}-\mathbf{w}^*\right\|^2, \label{eq: A1 last equation}
    \end{align}
    where (\ref{eq: A1 last equation}) is due to $\eta^{(n)}\leq \frac{1}{2L}$. 

    Then analyze the item $A_2$. 
    \begin{align}
        A_2 &= \left(\eta^{(n)}\right)^2\sum_{m=1}^M\frac{d_m^2}{d^2}\mathbb{E}\left\|\nabla L_m\left(\mathbf{w}^{(n)}\right)-\widetilde{\mathbf{g}}_m^{(n)}\right\|^2 \notag\\
        &\leq \left(\eta^{(n)}\right)^2\sum_{m=1}^M\frac{d_m^2}{d^2} \sigma^2.  \label{eq: A2 last equation}
        \end{align}
    Combine (\ref{eq: A1 last equation}) and (\ref{eq: A2 last equation}), we can get the desired results, which ends the proof. 
\end{proof}
Substitute (\ref{eq: effect of outage}), (\ref{eq:effect of compression}), and (\ref{eq:effect of SGD}) into (\ref{eq: decompose of gap}), combining Assumption 4, Lemma \ref{eq: per round improvement} can be  obtained, which ends the proof.
\bibliographystyle{ieeetr}

\bibliography{BibDesk_File_v2}
\vspace{-2mm}    
\end{document}